\title{Verifying Strong Eventual Consistency in $\delta$-\CRDTs}
\author{Taylor Blau}
\begin{document}
  \pagenumbering{gobble}

  \newgeometry{hmargin=1in,centering}
    \begin{titlepage}
  \centering
  \makeatletter
    {\Huge \@title \par}
  \makeatother
  \vskip 4em
  by
  \vskip 3em
  {
    \lineskip .5em
    \begin{tabular}[t]{c}
      { \Large Taylor Blau } \\ \\
      { Supervised by Dan Grossman }
    \end{tabular}\par}
  \vskip 5em
  {
    A senior thesis submitted in partial fulfillment of \\
    \vskip 1em
    the requirements for the degree of
  }
  \vskip 3em
  {
    \large
    Bachelor of Science \\
    With Departmental Honors \\

    \vskip 1em

    Computer Science \& Engineering \\

    \vskip 1em

    University of Washington \\

    \vskip 1em

    June 2020
  }
  \vfill
  {
    \flushleft
    \noindent
    Presentation of work given on~\makebox[3in]{\hrulefill} \\
    \vskip 1em

    \noindent
    Thesis and presentation approved by~\makebox[3in]{\hrulefill} \\
    \vskip 1em

    \noindent
    Date~\makebox[2in]{\hrulefill} \\
  }
\end{titlepage}

  \restoregeometry

  \frontmatter
  \cleardoublepage
  \pagenumbering{roman}

  \chapter*{Abstract}

Conflict-free replicated data types (\CRDTs) are a natural structure with which
to communicate information about a shared computation in a distributed setting
where coordination overhead may not be tolerated, and individual participants
are allowed to temporarily diverge from the overall computation.  Within this
setting, there are two classical approaches: state- and operation-based \CRDTs.
The former define a commutative, associative, and idempotent \textit{join}
operation, and their states a \textit{monotone join semi-lattice}. State-based
\CRDTs may be further distinguished into classical- and $\delta$-state \CRDTs.
The former communicate their \emph{full} state after each update, whereas the
latter communicate only the \emph{changed} state. Op-based \CRDTs communicate
\emph{operations} (not state), thus making their updates non-idempotent.
Whereas op-based \CRDTs require little information to be exchanged, they demand
relatively strong network guarantees (exactly-once message delivery), and
state-based \CRDTs suffer the opposite problem. Both satisfy \textit{strong
eventual consistency} (\SEC).

We posit that $\delta$-state \CRDTs both (1) require less communication overhead
from payload size, and (2) tolerate relatively weak network environments, making
them an ideal candidate for real-world use of \CRDTs. Our central intuition is a
pair of reductions between state-, $\delta$-state, and op-based \CRDTs. We
formalize this intuition in the Isabelle interactive theorem prover and show
that state-based \CRDTs achieve \SEC. We present a relaxed network model in
Isabelle and show that state-based \CRDTs still maintain \SEC. Finally, we
extend our work to show that $\delta$-state \CRDTs maintain \SEC when only
communicating $\delta$-state fragments, even under relatively weak network
conditions.

  \chapter*{Acknowledgements}

This thesis is the product of many ideas grown out of collaboration and
discussion with my advisory committee, as well as other researchers in this
area.

First, Talia Ringer, my senior thesis mentor. Talia's thoughtfulness and
willingness to absorb a new research area was inspiring and fostered me to
look at this area from a new angle. Her patience in acquainting me with
interactive theorem provers was key in making this thesis possible. Though
always a source of good ideas, this thesis would not exist without Talia's
unwavering support. I would be remiss if I did not mention Talia's encouragement
throughout, even when the process was overwhelming.

Second, Dan Grossman, my faculty advisor. Dan has made my undergraduate
experience meaningful in ways that I am not sure many others are as fortunate as
I to have experienced. Dan took a skeptical pre-freshman, encouraged him to take
CSE 341, and indulged him in many walks back to the Paul G. Allen building after
class. Dan allowed me to T.A. for him, and was unflapped when I informed him
that I had volunteered him to be my faculty advisor.\footnote{It could be said
he was volun\textit{told}.} Of course, Dan is also a font of insight, offering
new ideas and perspectives when they were needed, and always giving me something
to think about after our meetings.

I would also like to thank Martin Kleppman, as well as his co-authors, for his
constant correspondence throughout this work. Their work is foundational to our
approach, and is the basis on which many of our ideas (and proofs) are built.
Martin was always willing to discuss the state of our work, and to offer his
guidance about interesting directions to pursue.

Finally, I wish to thank my family. My Mom and Dad, for their love, for always
encouraging me, and for giving me the freedom to explore areas that interested
me. Tracy and Richard Lippard, for their encouragement and hospitality during
which significant portions of this thesis were written. Lastly, I wish to thank
Maya Lippard, my partner, constant source of inspiration, and without whom this
thesis would not exist.

  \chapter*{Dedication}

\begin{flushright}
  \textit{To Maya, forever and ever.}
\end{flushright}

  \tableofcontents
  \listoffigures

  \mainmatter
  \chapter{Introduction}

Computational systems today are larger than ever. Whereas previously one would
architect their programs to run on a single system, it is now commonplace to
design programs that share computation across multiple machines which
communicate with each other in a coordinated fashion. Therefore, it is natural
to ask why one might design from the latter perspective rather than the former.
The answer is threefold:
\begin{enumerate}
  \item \emph{Resiliency}. Designing a computational workload to be distributed
    among participants tolerates the failure of any one (or more) of those
    participants.
  \item \emph{Scalability}. When designed from a distributed standpoint,
    ``scaling'' your workload to meet a higher demand is reduced to adding
    additional hardware, not designing more efficient ways to do the
    computation.
  \item \emph{Locality}. When a system is accessed from a broad set of
    geographic locations, strategic placement of hardware in locations near
    request-origin sites can lower latency for users.
\end{enumerate}

So, it is clear that as our demand on such computations grow, that so too
must our need to design these systems in a way that first considers the concerns
of resiliency, scalability, and locality.

In order to design systems in this way, however, one must consider additionally
the challenges imposed by not having access to shared memory among participants
in the computation. If a program runs in a single-threaded fashion on a single
computer, there is no need to coordinate memory accesses, since only one part of
the program may read or write memory at a given time. If the program is written
to be multithreaded, then the threads must coordinate among themselves by using
mutexes or communication channels to avoid race conditions and other concurrency
errors.

The same challenge exists when a system is distributed at the hardware and
machine level, rather than among multiple threads running on a single piece of
hardware. The challenge, however, is made more difficult by the fact that the
communication overhead is far higher between separate pieces of hardware than
between two threads.

This thesis focuses on datatypes by which computation can be coordinated across
multiple machines. In particular, we formalize a set of consistency guarantees
(namely, Strong Eventual Consistency, hereafter \SEC) over a class of replicated
datatypes, $\delta$-state Conflict-Free Replicated Datatypes (\CRDTs). We
describe the preliminaries necessary to contextualize the body of this work in
the following section.

\section{Preliminaries}
Our discussion here focuses on \CRDTs, which are designed to be
both easily distributed and require relatively low coordination overhead by
allowing individual participants to diverge temporarily from the state of the
overall computation. That is, the computation reflects a different value
depending on which participant in the computation responds to the request.

These datatypes operate in such a way so as to both avoid conflict between
concurrent updates, and to avoid locking and coordination
overhead~\citep{shapiro11}. \CRDTs have seen moderate use in industry.
Based on introspection of the runtime headers in iOS, Apple is believed to use
\CRDTs for offline synchronization of content in their note-taking
app, Notes~\citep{applenotes}.  Redis, a popular open-source distributed cache
uses \CRDTs in their Enterprise offering to perform certain kinds of replication
and conflict-resolution~\citep{redis}.

\CRDTs are said to achieve \SEC which is to say that they achieve a stronger
form of \textit{eventual consistency} (\EC). We summarize the definitions of
eventual- and strong eventual consistency from~\cite{shapiro11}.

\begin{definition}[Eventual Consistency]
  \label{def:eventual-consistency}
  A replicated datatype is \emph{eventually consistent} if:
  \begin{itemize}
    \item Updates delivered to it are eventually delivered to all other replicas
      in the system.
    \item All well-behaved replicas that have received the same set of updates
      eventually reflect the same state.
    \item All executions on this datatype are terminating.
  \end{itemize}
\end{definition}

\begin{definition}[Strong Eventual Consistency]
  A replicated datatype is \emph{strong eventually consistent} if:
  \begin{itemize}
    \item It is eventually consistent, as above.
    \item Convergence occurs immediately, that is, any two replicas that have
      received the same set of updates \emph{always} reflect the same state.
  \end{itemize}
\end{definition}

Broadly speaking, there are two classes of \CRDTs, which we refer to as the op-
and state-based variants. We will provide formal definitions for each of the two
classes in Chapter~\ref{chap:background}. We now present brief definitions of
op- and state-based \CRDTs based on~\citet{baquero14} and~\citet{shapiro11}:

\begin{definition}[Operation-based Conflict-Free Replicated Datatype (op-based
\CRDT)]
  op-based \CRDTs apply updates in two phases:
  \begin{enumerate}
    \item First, an operation is \emph{prepared} locally. At this phase, the
      op-based \CRDT combines the operation with the current state to send a
      representation of the update to other replicas.
    \item Then, the represented operation is applied to other replicas using
      \emph{effect}, where \emph{effect} is commutative for concurrent
      operations.
  \end{enumerate}
\end{definition}

\begin{definition}[State-based Conflict-Free Replicated Datatype (state-based
\CRDT)]
  state-based \CRDTs only apply updates to their local state, and periodically
  send serialized representations of the contents of their state to other
  replicas.

  Crucially, these states form a \textit{monotone join semi-lattice} (i.e,. a
  lattice $\langle S, \sqcup \rangle$ where for any $s_1, s_2 \in S$ at both
  $s_1 \sqsubseteq s_1 \sqcup s_2$ and $s_2 \sqsubseteq s_1 \sqcup s_2$ hold for
  commutative, associative, and idempotent $\sqcup$).

  To achieve convergence, state-based \CRDTs periodically send their state to
  other replicas, which then replace their own state by joining the received
  state into their own.
\end{definition}

\section{op- and state-based trade-offs}

These two classes are distinguished from one another based on their strengths
and weaknesses. In one sense, op- and state-based \CRDTs form a kind of a dual,
where they trade off strong network guarantees for message payload
size~\citep{baquero14}.

Because the state-based \CRDT needs to send a representation of its entire
state, it often requires a significant amount of network bandwidth to propagate
large messages~\citep{almedia18}. In Section~\ref{sec:example-gcounter} we will
present an example where the payload size grows as a linear function of the
number of replicas. In return for this large payload size, state-based \CRDTs
are able to achieve \SEC even in networks that are allowed to drop, reorder, and
duplicate messages.

On the other hand, op-based \CRDTs require relatively little network bandwidth
to send a notification of a single update (typically the representation
generated in the \textit{prepare} stage is dwarfed by the typical payload size
of a state-based \CRDT), but in exchange demand that the network deliver
messages in-order for sequential (comparable) updates and
at-most-once delivery~\citep{shapiro11}.

Significant work in this area (\cite{almedia18, enes18, cabrita17,
vanDerLinde16}) has focused on mediating these two extremes. This line of
research (particularly in~\citet{almedia18}) has identified $\delta$-state
\CRDTs---a variant of the state-based \CRDT which we discuss in
Section~\ref{sec:state-based-crdts}---as an alternative which occupies a
satisfying position between the two extremes.  $\delta$-state \CRDTs behave as
traditional state-based \CRDTs, with the exception that their updates consist of
state \emph{fragments} instead of their entire state. These fragments (generated
by $\delta$-mutators and called $\delta$-updates) are then applied locally at
all other replicas to reassemble the full state. Because these fragments often
do not need to comprise the full state, $\delta$-state \CRDTs in general have
small payload size (thus requiring a similar amount of bandwidth as messages
sent and received from op-based \CRDTs), while still tolerating the same set of
network deficiencies as state-based \CRDTs. This combination of properties makes
them an appealing alternative to traditional state- and op-based \CRDTs, and
places interest in studying their convergence properties.

\section{Contributions}

Our main contribution builds on the work in~\citet{gomes17} and introduces a set
of formally verified, machine-checked proofs in Isabelle~\citep{wenzel02} of the
main result in~\citet{almedia18}, which we re-state below:\footnote{The source
of our proofs is available for free at:
\url{https://github.com/ttaylorr/thesis}.}

\begin{theorem}[Almedia, Shoker, Baquero, '18]
  Consider a set of replicas of a $\delta$-\CRDT object, replica $i$ evolving
  along a sequence of states $X_i^0 = \bot$, $X_i^1=\ldots$, , each replica
  performing delta-mutations of the form $m^\delta_{i,k}(X^k_i)$ at some subset
  of its sequence of states, and evolving by joining the current state either
  with self-generated deltas or with delta-groups received from others. If each
  delta-mutation $m^\delta _{i,k}(X^k_i)$ produced at each replica is joined
  (directly or as part of a delta-group) at least once with every other replica,
  all replica states become equal.
\end{theorem}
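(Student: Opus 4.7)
The plan is to exploit the monotone join semi-lattice structure directly: since $\sqcup$ is commutative, associative, and idempotent, any state reached by a replica can be characterized purely by the \emph{set} of delta-mutations that have contributed to it, independent of the order in which they arrived, whether they arrived singly or bundled into delta-groups, and whether they arrived more than once. The proof then reduces to showing that the hypothesis guarantees every replica eventually has the same contributing set, and that any two joins over the same set are equal.

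First I would define, for each replica $i$ and each state $X_i^n$ along its sequence, the set $D_i^n$ of primitive delta-mutations $m^\delta_{j,k}(X^k_j)$ that have been folded (directly, or via a delta-group) into $X_i^n$. I would then prove by induction on $n$ the invariant $X_i^n = \bot \sqcup \bigsqcup_{\delta \in D_i^n} \delta$. The inductive step splits into two cases: either $X_i^{n+1}$ arises by joining a self-generated delta, in which case $D_i^{n+1} = D_i^n \cup \{\delta\}$ and the equation follows by associativity of $\sqcup$; or $X_i^{n+1}$ arises by joining a received delta-group $d$, and here I would need a sub-lemma that every delta-group is itself expressible as $\bigsqcup_{\delta \in G} \delta$ for some finite set $G$ of primitive deltas. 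That sub-lemma is an induction on the recursive construction of delta-groups, again leaning on associativity.

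With the invariant in hand, the hypothesis that every delta-mutation is eventually joined (possibly as part of a delta-group) at every replica gives $D_i^\infty = D_j^\infty$ for every pair $i,j$. Commutativity of $\sqcup$ lets me reorder the joins, and idempotence collapses any duplicates introduced by repeated delivery or by overlapping delta-groups, so the two limiting joins produce the same element of the semi-lattice, i.e.\ $X_i^\infty = X_j^\infty$, which is the desired convergence.

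The hard part, and where I expect the Isabelle formalization to concentrate, is the sub-lemma about delta-groups: a delta-group can itself be built by joining an earlier delta-group with another delta or delta-group received over a lossy, reordering, duplicating network, and this recursive structure must be unfolded so that membership in $D_i^n$ is defined in a way that is robust to the network's misbehavior. Keeping the bookkeeping between ``deltas ever produced anywhere'' and ``deltas ever absorbed at replica $i$'' precise enough for the induction to go through, while remaining independent of any particular delivery schedule, is the main technical obstacle; once it is discharged, the semi-lattice algebra does the rest.
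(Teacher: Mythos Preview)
Your argument is sound and is essentially the direct semi-lattice proof one finds in~\citet{almedia18}: track, for each replica state, the finite set of primitive $\delta$-mutations absorbed so far, establish the invariant $X_i^n = \bigsqcup D_i^n$ by induction on the state sequence, and conclude by observing that the hypothesis forces every $D_i$ to eventually equal the global set of all deltas ever produced. The sub-lemma you flag about unfolding delta-groups into their primitive constituents is exactly the right technical obligation, and idempotence plus commutativity then finishes the job.

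The thesis, however, does \emph{not} mechanize this argument. It takes a genuinely different route: rather than reasoning about the lattice structure directly, it encodes $\delta$-state \CRDTs as op-based \CRDTs via the reduction of Maxim~\ref{maxim:delta-as-op} (the ``operation'' is the $\delta$-fragment, and the effect-update is the join) and then discharges \SEC by instantiating the \isa{strong\isacharunderscore eventual\isacharunderscore consistency} locale of~\citet{gomes17}. The proof burden in that framework is not your invariant on absorbed-delta sets but rather showing that the induced op-based operations commute (e.g., \isa{gcounter\isacharunderscore op~x~\isasymrhd~gcounter\isacharunderscore op~y~=~gcounter\isacharunderscore op~y~\isasymrhd~gcounter\isacharunderscore op~x}), after which the locale machinery supplies convergence. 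Moreover, the thesis does this only for concrete instances (G-Counter, G-Set) rather than for an arbitrary $\delta$-\CRDT.

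What each approach buys: your direct argument is general and self-contained, and it makes the role of idempotence under message duplication completely explicit. The paper's reduction sacrifices generality and directness in exchange for reusing an existing, already-mechanized library; the heavy lifting (that commuting concurrent operations plus causal delivery imply \SEC) is inherited from~\citet{gomes17}, so each new \CRDT requires only a short commutativity lemma. Your plan would make a fine basis for the ``direct $\delta$-state proofs'' the thesis lists as future work in Section~\ref{sec:direct-delta-proofs}.
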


Here, $X_i^t$ refers to the state of the $i$th replica at time $t$, and
$m^\delta_{i,k}(X_i^k)$ refers to the $\delta$-mutation applied at the $i$th
replica at time $k$.

We rely on the work of~\citet{gomes17} in order to build a handful of state- and
$\delta$-state \CRDTs as in~\citet{almedia18} to show that even under weak
network guarantees\footnote{We inherit dropping and reordering of messages from
the original work of~\citet{gomes17}, but further relax the network model by
also allowing messages to be duplicated.} these $\delta$-state \CRDTs still
achieve \SEC.

Our verification efforts yielded a pair of \CRDTs---the grow-only counter
(G-Counter) and set (G-Set)---in three encodings: one state-based, and two
$\delta$-state encodings. Our key idea guiding these verification efforts is to
treat op- and state-based \CRDTs similarly by modeling state-based \CRDTs as
op-based where the operation is the join provided by the
semi-lattice.\footnote{This approach is described in detail in
Section~\ref{sec:state-as-op}.} We show that \SEC is preserved in these \CRDTs,
even when the underlying $\isa{network}$ interface has been weakened
substantially from when it was introduced in the aforementioned work.

The remainder of this thesis is ordered as follows:
\begin{itemize}
  \item In Chapter~\ref{chap:background}, we summarize existing research in the
    broader realm of \CRDTs. We present formal definitions of op- and
    state-based \CRDTs, and conduct a thorough discussion of their relative
    strengths and weaknesses. Likewise, we present a summary of some work in the
    area of $\delta$-state \CRDTs, and present its strengths.
  \item In Chapter~\ref{chap:crdt-instantiations}, we discuss examples of two
    \CRDTs in an op-, state-, and $\delta$-state style. These objects will be
    the subject of our verification efforts in Chapter~\ref{chap:example-crdts}.
  \item In Chapter~\ref{chap:crdt-reductions}, we outline a pair of reductions
    between state-, op-, and $\delta$-state based \CRDTs which guides the
    majority of our proof strategy.
  \item In Chapter~\ref{chap:example-crdts}, we discuss the outcome of our
    approach by presenting a pair of successfully-verified $\delta$-state
    \CRDTs, as well as describe our efforts in relaxing the network model in
    order to verify these objects over a non-trivial set of network behaviors.
  \item In Chapter~\ref{chap:future-work}, we suggest future
    research directions. We consider a handful of areas in which formalizing
    existing results may be fruitful, as well as a handful of additional
    approaches to the proofs we presented here.
  \item In Chapter~\ref{chap:conclusion}, we conclude.
\end{itemize}

  \chapter{Background}
\label{chap:background}

This chapter outlines the preliminary information necessary to contextualize the
remainder of this thesis for readers unfamiliar with existing \CRDT research.
Here we motivate \CRDTs, formalize their state- and op-based variants, and
present examples of common instantiations. Finally, we conclude with a
discussion of the different levels of consistency guarantees that each \CRDT
variant offers, and rationalize which levels of consistency are appealing in
certain situations.

\section{Motivation}
\CRDTs are a way to store several copies of a data-structure on multiple
computers which form a distributed system. Each participant in the system can
make modifications to the datatype without the need for explicit coordination
with other participants. \CRDT implementations are designed so that
coordination-free updates which may conflict with one another always have a
deterministic resolution. This allows multiple participants to query and modify
their \emph{view} of the replicated datatype, without the traditional overhead
and implementation burden that more stringent replication algorithms require.

Here, we'll discuss three variants of \CRDTs: state-based, op-based, and
$\delta$-state based. Each of these variants achieve a consistent value by the
use of different message types, and each likewise requires a different set of
delivery semantics. In this chapter, we identify $\delta$-state \CRDTs as
achieving an appealing set of trade-offs among each of the three variants. We
restate that they are able to achieve \SEC (the best reasonably-achievable
consistency guarantee for most \CRDT applications) while maintaining both:
\begin{itemize}
  \item A relatively small payload size, as is the benefit of op-based \CRDTs,
    and
  \item Relatively weak delivery semantics, as is the benefit of state-based
    \CRDTs.
\end{itemize}

\section{Coordinated Replication}
In a distributed system, it is common for more than one participant to need to
have a \textit{view} of the same data. For example, multiple nodes may need to
have access to the same internal data structures necessary to execute some
computation. When a piece of data is shared among many participants in a system,
we say that that data is \textit{replicated}.

However, saying only that some data is ``replicated'' is underspecified. For
example: how often is that data updated among multiple participants? How does
that data behave when multiple participants are modifying it concurrently? Do
all participants always have the same view of the data, or are there temporary
divergences among the participants in the system?

It turns out that the answer to the last question is of paramount importance.
Traditionally speaking, in a distributed system, all participants have an
identical replica of any piece of shared data at all times. That is, at no
moment in time will there be a replica that could atomically compare its
replicated value for some data with any other replica for equality and disagree.
Said otherwise, all replicated values are equal everywhere all at once. This is
an appealing property to say the least, because it allows system designers to
conceptually treat a distributed system as a single unit of computation. That
is, if all replicas maintain the same memory, it is conceptually as if one whole
machine is being replicated many times.

That being said, upholding this requirement is not a straightforward task. Some
question that arise are: who coordinates when updates to a piece of data are
replicated to other participants in the system? What happens when the
coordinator becomes unresponsive, or otherwise misbehaves? Who is responsible
for electing a new participant to take over the coordination duties of the
participant which was no longer able to fulfill them?

\section{Distributed Consensus Algorithms}
\label{sec:dca-safety}

These questions give rise to the area of consensus algorithms. Broadly speaking,
a consensus algorithm is a routine which multiple participants follow in order
to agree on a shared value.

We first state briefly the properties that an algorithm must have to solve
distributed consensus from~\citet{howard19}:
\begin{definition}[Distributed Consensus Algorithm]
  \label{def:consensus}
  An algorithm is said to solve distributed consensus if it has the following
  three safety requirements:
  \begin{enumerate}
    \item \emph{Non-triviality}: The decided value must have been proposed by a
      participant.
    \item \emph{Safety}: Once a value has been decided, no other value will be
      decided.
    \item \emph{Safe learning}: If a participant learns a value, it must learn
      the decided value.
  \end{enumerate}
  In addition, it must satisfy the following two progress requirements:
  \begin{enumerate}
    \item \emph{Progress}: Under previously agreed-upon liveness conditions, if
      a value is proposed by a participant, then a value is eventually decided.
    \item \emph{Eventual learning}: Under the same conditions as above, if a
      value is decided, then that value must be eventually learned.
  \end{enumerate}
\end{definition}

The two most popular algorithms in this field are Paxos and
Raft~\citep{howard20,lamport98,ongaro14}. Each implements distributed
state-machine replication and can be used to implement linearizable systems.
Both of these systems are notoriously difficult to understand and implement
correctly in practice~\citep{howard20}. The topics often appear in
undergraduate-level courses in Distributed Systems, and have been the subject of
extensive verification effort to date~\citep{wilcox15}. Often, these distributed
systems verification efforts require an enormous amount of effort.  In a
companion paper~\citet{woos16} use on the order of 45,000 lines of proof scripts
to verify the complete Raft protocol in their system.

It is natural to ask what is the property of these systems that makes them
difficult to implement or reason about correctly in practice. One possible
answer is to look at the stringent safety requirements (that is, that once a
value has been decided, no other value(s) will be decided) in these algorithms.

\CRDTs are a natural response to this. By allowing participants to temporarily
diverge from the state of the overall computation (cf., the second property of
Definition~\ref{def:eventual-consistency}), \CRDTs allow replicas to violate the
safety property of Definition~\ref{def:consensus}.  By giving up the immediacy
and permanence that the safety properties of a traditional distributed consensus
algorithm, \CRDTs allow for a dramatically lower implementation burden in
practice, and are substantially easier to reason about.

\section{Consistency Guarantees}
\CRDTs are said to attain a weaker form of consistency known as \emph{strong
eventual consistency}~\citep{shapiro11}. \SEC is a refinement of \emph{eventual
consistency} (\EC). Informally, \EC says that reads from a system eventually
return the same value at all replicas, while \SEC says that if any two nodes
have received the same set of updates, they will be in the same state.

\EC and the \SEC extension are natural answers to the question we pose in
Section~\ref{sec:dca-safety}. That is, we posit that it is the safety
requirement in traditional Distributed Consensus Algorithms which make them
difficult to implement correctly. \EC makes only a liveness guarantee, and so on
its own it is not a sufficient solution for handling distributed consensus in an
environment with relaxed requirements. \SEC, however, does add a safety
guarantee, but the precondition (namely that only nodes which have received the
same \emph{set} of updates will be in the same state) makes it possible to relax
our requirements around network delays, or particulars of a \CRDT algorithm
which do not send updates to all other replicas immediately.

In short, we believe that it is this relaxation--that is, that \CRDTs are only
required to be in the same state \emph{eventually}, conditioned on which updates
they have and have not yet received--which makes \SEC an appealing consistency
property for distributed systems which more relaxed requirements than would be
satisfied by a linearizable system.

We discuss each of these consistency classes in turn.

\subsection{Eventual Consistency}
\EC captures the informal guarantee that if all clients stop submitting updates
to the system, all replicas in the system eventually reach the same
value~\citep{shapiro11}. More formally, \EC requires the following three
properties~\citep{shapiro11}:
\begin{enumerate}
  \item \emph{Eventual delivery}. An update delivered at some correct replica is
    eventually delivered at all replicas.
    \[
      \forall r_1, r_2.\, f \in (\textsf{delivered}~r_1) \Rightarrow \Diamond f
      \in (\textsf{delivered}~r_2)
    \]
  \item \emph{Convergence}. Correct replicas which have received the same
    \emph{set} of updates eventually reflect the same state.
    \[
      \forall r_1, r_2.\,~\square~(\textsf{delivered}~r_1) =
      (\textsf{delivered}~r_2) \Rightarrow \Diamond~\square~q(r_1) = q(r_2)
    \]
  \item \emph{Termination}. All method executions terminate.
\end{enumerate}

(For readers unfamiliar with modal logic notation, we use $\Diamond$ to precede
a logical statement that is true at \emph{some} time, whereas we use $\square$
to precede a logical statement that is true at \emph{all} times.)

\EC is a relatively weak form of consistency. In~\citet{shapiro11}, it is
observed that \EC systems will sometimes execute an update immediately only to
discover that it produces a conflict with some future update, and so frequent
roll-backs may be performed. This imposes an additional constraint, which is
that replicas need to form consensus on the ``standard'' way to resolve
conflicts so that the same conflicts are resolved identically at different
replicas.

We devote some additional discussion to the first property of \EC. Eventual
delivery requires that all updates delivered to some correct replica are
eventually delivered to all other correct replicas. This property alone permits
too much of the underlying network, and so it can make it difficult to reason
about strong consistency guarantees over an unreliable network.

Take for an example a network which never delivers any messages. In this case,
the precondition for eventual delivery is not met, and so we are relieved of the
obligation to prove that updates are propagated to other replicas, since they
aren't delivered anywhere in the first place. However, consider a network which
delivers only the \emph{first} message sent on it, and then drops all other
messages. In this case, it \emph{is} possible that a replica will receive some
update, attempt to propagate it to other replicas, only for them to never be
delivered.

To resolve this conflict in practice, one of two approaches is often taken. In
the first approach, assume a fair-loss network~\citep{cachin11} in which each
message has a non-zero probability of being delivered. To ensure that messages
are delivered, each node sends each message an infinite number of times over the
network, such that it will be delivered an infinite number of
times.\footnote{This approach is due to Martin Kleppman over e-mail, but can
also be found in the literature, for eg.,~\citet{shapiro11}.} This
resolves the eventual delivery problem since we assumed a sufficient (but
weaker) condition of the underlying network, and then showed it is possible to
implement eventual delivery on top of these network semantics.

In the second approach, we first consider a set of delivery semantics $P$ which
predicates allowed and disallowed network behaviors. Typically, $P$ is assumed
to preserve causal order.\footnote{This is a standard
assumption~\citep{shapiro11,gomes17}, and can be implemented by assigning a
vector-clock and/or globally-unique identifier (UID) to each message at the
network layer.} We then refine $P$ to ensure that the properties of \EC (and
\SEC) can be implemented on top of the network, resolving our problem by
discarding degenerate network behaviors.

\subsection{Strong Eventual Consistency}
Another downside of implementing a system which only upholds \EC is that \EC is
merely a liveness guarantee. In particular, \EC does not impose any restriction
on nodes which have received the same set or even sequence of messages. That is,
a pair of replicas which have received the exact set of messages in the exact
same order are not required to return the same value.

\SEC addresses this gap by imposing a safety guarantee in addition to the
previous liveness guarantees in \EC. That is, a system is \SEC when the
following two conditions are met:
\begin{enumerate}
  \item The system is \EC, per above guidelines.
  \item \emph{Strong convergence}. Any pair of replicas which have received the
    same set of messages must return the same value when queried immediately.
    \[
      \forall r_1, r_2.\, (\textsf{delivered}~r_1) = (\textsf{delivered}~r2)
        \Rightarrow q(r_1) = q(r_2)
    \]
\end{enumerate}

That is, it is the strong convergence property of \SEC that distinguishes it
from \EC. On top of \EC, strong convergence is only a moderate safety
restriction. In particular, it imposes no requirements on replicas which have
not received the same sequence or even set of updates. So, unlike strong
distributed consensus algorithms like Paxos or Raft which are fully
linearizable~\citep{lamport98,ongaro14}, \SEC allows certain replicas to be
``behind.'' That is, a replica which hasn't yet received all relevant updates in
the system is allowed to return an earlier version of the computation.

Informally, this means that replicas in the system are allowed to temporarily
diverge from the state of the overall computation. As soon as no more updates
are sent to the system, property (1) of \EC requires that all replicas will
\emph{eventually} converge to a uniform view of the computation.

\section{state-based \CRDTs}
\label{sec:state-based-crdts}

Now that we have discussed \EC and \SEC, we will turn our attention to datatypes
that implement these consistency models. \CRDTs are a common way to implement
the consistency requirements in \SEC. So, we begin with a discussion of
state-based \CRDTs from their inception in~\citet{shapiro11}. A state-based
\CRDT is a 5-tuple $(S, s^0, q, u, m)$. An individual replica of a state-based
\CRDT is at some state $s^i \in S$ for $i \ge 0$, and is initially $s^0$. The
value may be queried by any client or other replica by invoking $q$. It may be
updated with $u$, which has a unique type per \CRDT object. Finally, $m$ merges
the state of some other remote replica.  Neither $q$ nor $u$ have pre-determined
types, per se, rather they are implementation specific. We discuss a pair of
examples to illustrate this point in Chapter~\ref{chap:crdt-instantiations}.

Crucially, the states of a given state-based \CRDT form a partially-ordered set
$\langle S, \sqsubseteq \rangle$. This poset is used to form a join
semi-lattice, where any finite subset of elements has a natural least
upper-bound. Consider two elements $s^m, s^n \in S$. The least upper-bound
$s = s^m \sqcup s^n$ is given as:
\[
  \forall s'.\; s' \sqsupseteq s^m, s^n \Rightarrow
    s^m \sqsubseteq s \land
    s^n \sqsubseteq s \land
    s \sqsubseteq s'
\]
In other words, a $s = s^m \sqcup s^n$ is a least upper-bound of $s^m$ and $s^n$
if it is the smallest element that is at least as large as both $s^m$ and $s^n$.

\subsection{Merging states}

For now, we set aside $q$ and $u$, and turn our attention towards the merging
function $m$. $m$ resolves the states of two \CRDTs into a new state, which is
then assigned at the replica performing the merge. Given a suitable set of
states which forms a lattice, we assume that:
\[
  m(s_1, s_2) = s_1 \sqcup s_2
\]
for some join semi-lattice with join operation $\sqcup$, and that whenever a
\CRDT replica $r_1$ at state $s_1$ receives an update from another replica
$r_2$ at state $s_2$, that $r_1$ attains a new state $s_1' = m(s_1, s_2)$.
This process, in addition to each replica periodically broadcasting an update
which contains its current state, is carried on continually, and $m$ is
invoked whenever a new state is received. That is, each replica is evolving
over time in response to outside instruction, and in turn these updates cause
internal state transitions, which themselves cause those new states to be
broadcast and eventually joined at every other replica.

The $\sqcup$ operator has three mathematical properties that make it an
appealing choice for joining states together as in $m$. These are its
\emph{commutativity}, \emph{associativity}, and \emph{idempotency}. That is, for
any states $s_1$, $s_2$, and $s_3$, that:
\begin{itemize}
  \item The operator is \emph{commutative}, i.e., that $s_1 \sqcup s_2 = s_2
    \sqcup s_1$, or that order does not matter.
  \item The operator is \emph{idempotent}, i.e., that $(s_1 \sqcup s_2)
    \sqcup s_2 = s_1 \sqcup s_2$, or that repeated updates reach a fixed point.
  \item Finally, the operator is \emph{associative}, i.e., that $s_1 \sqcup (s_2
    \sqcup s_3) = (s_1 \sqcup s_2) \sqcup s_3$, or that grouping of arguments
    does not matter.
\end{itemize}

These mathematical properties correspond to real-world constraints that often
arise naturally in the area of distributed systems. We provide examples for each
of these three properties below:

\paragraph{Commutativity} Take, for example, that messages may occur out of
order. This often happens in, for example, UDP (User Datagram Protocol)
networks, where the received datagrams are not guaranteed to be in the order
that they were sent. Because $\sqcup$ is commutative, replicas joining the
updates of other replicas do not need to receive those updates in order, because
the result of $s_1 \sqcup s_2$ is the same as $s_2 \sqcup s_1$.  That is, it
does not matter which of two updates from another replica arrives first, because
the result is the same no matter in which order they are delivered.

For concreteness, say that we have two replicas, $r_1$ and $r_2$. $r_1$
initially begins at state $s$, and $r_2$ progresses through states $s_1, \ldots,
s_n$ for $n > 0$. We then see that it does not matter the order in which these
updates are delivered to $r_1$. Suppose that we have a bijection $\pi : [n] \to
[n]$ which maps the true order of a state $s_i$ to the order in which it was
delivered. Then, we can see that the choice of $\pi$ is arbitrary, because:
\[
  s \gets s \sqcup (s_{\pi(1)} \sqcup \cdots \sqcup s_{\pi(n)})
\]
for any choice of $\pi$, because
\[
  s_{\pi(1)} \sqcup \cdots \sqcup s_{\pi(n)} = s_1 \sqcup \cdots \sqcup s_n
\]
which follows from the fact that $\sqcup$ is commutative. This can be shown
inductively on the number of updates, $n$, given the commutativity of $\sqcup$.

\paragraph{Idempotency} Next, it is often common for packets to be duplicated in
transit over a network.  That is, even though a packet may be sent from a source
only once, it may be received by a recipient on the same network multiple times.
For this, the idempotency of $\sqcup$ comes in handy: no matter how many times a
state is broadcast from an evolving replica, any other replica on the network
will tolerate that set of messages, because it only requires the message to be
delivered once. Any additional duplicates are merged in without changing the
state.

\paragraph{Associativity} Finally, associativity is an appealing property, too,
although its applications are both less immediate and less often-used in this
thesis. Suppose that several replicas of a state-based \CRDT reside on a network
with, say, high latency, or it is otherwise undesirable to send more messages on
the network than is necessary. Because associativity implies that the grouping
of updates is arbitrary, a replica can maintain a \textit{set} of pending
updates, and periodically send that set to other replicas by first folding
$\sqcup$ over it and sending a single update.\footnote{``Periodically'' is
arbitrary and is left up to the implementation, but it would be easy to imagine
that this could be interpreted as whenever the set reaches a certain size,
and/or after a certain amount of time has passed since flushing the set of
pending updates.}

\section{op-based \CRDTs}
\label{sec:op-based-crdts}

Operation-based (op-based) \CRDTs evolve their internal states over time, but
these states need not necessarily form a semi-lattice.  Likewise, the
communication style of op- and state-based \CRDTs differ fundamentally: op-based
\CRDTs communicate \textit{operations} that indicate a kind of update to be
applied locally, instead of the \textit{result} of that update (as is the case
in state-based \CRDTs).

An op-based \CRDT is a $6$-tuple $(S, s^0, q, t, u, P)$. As in
Section~\ref{sec:state-based-crdts}, $S$, $s^0$, and $q$, retain their
original meaning (that is, the state set, an initial state, and a query
function).  In op-based \CRDTs, the pair $(t,u)$ takes the place of the $m$
merging function from state-based \CRDTs. $t$ and $u$ correspond to
\textit{prepare-update} and \textit{effect-update}, respectively. When an update
is made by a caller (say, for example, incrementing the value of an op-based
\CRDT counter), it is done in two phases~\citep{shapiro11}:
\begin{enumerate}
  \item First, the \textit{prepare-update} implementation $t$ is applied at the
    replica receiving the update. $t$ is side-effect free, and prepares a
    representation of the operation about to take place.
  \item Then, the \textit{effect-update} implementation $u$ is applied at the
    local and remote replicas if and only if the delivery precondition $P$ is
    met, causing the desired update to take effect. $P$ is interpreted
    temporally~\citep{shapiro11}, and is a precondition on whether or not
    operations necessary to process the \emph{current} operation have already
    been incorporated into the \CRDT's state. $P$ is traditionally assumed to be
    disabled until all messages which happened before the current message have
    been delivered, preserving causality.
\end{enumerate}

This is the critical distinction between op- and state-based CRDTS:
state-based \CRDTs propagate their state by applying a local update and taking
advantage of the lattice structure of their state-space in order to define a
convenient merge function. On the other hand, op-based \CRDTs propagate their
state by sending the \textit{representation} of an update to other replicas as
an instruction. This critical juncture translates into a corresponding
relaxation in the operation $(t, u)$, which is that unlike the state-based \CRDTs
whose $m$ must be commutative, associative, and idempotent, and op-based \CRDT
implementation of $(t, u)$ need only be commutative.

To explain why, we briefly restate the definition of a causal history for
op-based \CRDTs:

\begin{definition}[op-based Causal History~\citep{shapiro11}]
An object's casual history $C = \{ c_1, \ldots, c_n \}$ is defined as follows.
Initially, $c_i^0 = \emptyset$ for all $i \in \mathcal{I}$. If the $k$th method
execution is idempotent (that is, it is either $q$ or $t$), then the causal
history remains unchanged in the $k$th step, i.e., that $c_i^k = c_i^{k-1}$. If
the execution at $k$ is non-idempotent (i.e., it is $u$), then $c_i^{k} =
c_i^{k-1} \cup \{ u_i^k(\cdot) \}$.
\end{definition}

Causal history of an op-based \CRDT is defined based on the
\textit{happens-before} relation $\to$ as follows. An update $(t,u)$ happens
before $(t',u')$ (i.e., that $(t, u) \to (t', u')$) iff $u \in c_{j}^{K_j(t')}$
if $K_j$ is the injective mapping from operation to execution time. Shapiro and
his co-authors go on to describe a sufficient definition for the commutativity
of $(t,u)$ in op-based \CRDTs. In effect, they say that two pairs $(t,u)$ and
$(t',u')$ commute if and only if for any reachable state $s \in S$ the effect of
applying them in either order is the same. That is, $s \circ u \circ u' \equiv s
\circ u' \circ u$.

They claim that having commutativity for concurrent operations as well as an
in-order delivery relation $P$ for comparable updates is sufficient to prove
that op-based \CRDTs achieve \SEC.

\section{$\delta$-state \CRDTs}
In this section, we describe the refinement of \CRDTs that is the interest and
focus of the body of this thesis. That is the $\delta$-state \CRDT, as described
in~\citet{almedia18}. In their original work, Almeida and his co-authors
describe $\delta$-state \CRDTs as:
\begin{quote}
...ship[ping] a \textit{representation of the effect} of recent update operations
on the state, rather than the whole state, while preserving the idempotent
nature of \textit{join}.
\end{quote}

We will present an example of the $\delta$-state \CRDT in a below section. For
now, we focus on the background material necessary to contextualize
$\delta$-state \CRDTs. This refinement can be thought of as taking ideas from
both state- and op-based \CRDTs to mediate some of the trade-offs described
above. Like a state-based \CRDT, $\delta$-state based \CRDTs have both internal
states and message payloads that form a join semi-lattice. This endows the
$\delta$-state \CRDT with a commutative, associative, and idempotent \emph{join}
operator, as before. Likewise, this means that the $\delta$-state \CRDT supports
relaxed delivery semantics, such as delayed, dropped,\footnote{In this thesis,
we consider dropped messages as having been delayed for an infinite amount of
time, allowing us to reason about a smaller set of delivery semantics.}
reordered, and duplicated message delivery.

Unlike a state-based \CRDT, however, $\delta$-state \CRDTs do not send their
internal state $s^k$ after an update at time $k-1$. We require that these states
have natural representations of their \emph{updates} which do not require
sending the full state to all other replicas. In many circumstances, these
updates can often be represented as ``smaller'' items within the set of all
possible reachable states. For example, in a \CRDT which supports adding to a
set of items, a $\delta$-mutation may be the singleton set containing the
newly-added item, whereas a traditional state-based \CRDT may include the full
set.

This means that:
\begin{itemize}
  \item $\delta$-state \CRDTs support the same weak requirements from the
    network as ordinary state-based \CRDTs. That is, they support dropping,
    duplicating, reordering, and delaying of messages.
  \item $\delta$-state \CRDTs have similarly low-overhead of message size as
    op-based \CRDTs.
\end{itemize}
On the converse, $\delta$-state \CRDTs do not:
\begin{itemize}
  \item ...have potentially large payload size, as state-based \CRDTs are prone
    to have.
  \item ...require a strong delivery semantics $P$ that ensures ordered,
    at-most-once delivery as op-based \CRDTs do.
\end{itemize}
Said otherwise, $\delta$-state \CRDTs have the relative strengths of both state-
and op-based \CRDTs without their respective drawbacks. This makes them an area
of interest, and they are the subject to which we dedicate the remainder of this
thesis.

  \chapter{Elementary \CRDT instantiations}
\label{chap:crdt-instantiations}

In this chapter, we provide the specification of two common \CRDT instantiations
in an op-, state-, and $\delta$-state based style. We discuss the Grow-Only
Counter (G-Counter) and Grow-Only Set (G-Set). Both of these will be the subject
of our verification efforts in Chapter~\ref{chap:example-crdts}.

In each of the below, we assume that $\mathcal{I}$ refers to the set of node
identifiers corresponding to the active replicas. In this thesis, we
consider $\mathcal{I}$ to be fixed during execution; that is, we do not support
addition or deletion of replicas. In practice, \CRDTs do support a dynamic set
of replicas, but we make this assumption for the simplicity of our formalism.

\section{Example: Grow-Only Counter}
\label{sec:example-gcounter}

\subsection{State-based G-Counter}
The G-Counter supports two very simple operations: \textsf{inc} (increment), and
query. When \textsf{inc} is invoked, the counter updates its internal state to
increment the queried value by one. When query is invoked, the counter returns a
number which represents the number of increment operations that have occurred
globally in the system, for which the replica processing the query knows about.
Note that this number is always at least as large as the number of times that
\textsf{inc} has been invoked \textit{at that replica}, and never larger than
the true value of times \textsf{inc} has been invoked globally.

This is our first example of \SEC, where replicas that are ``behind,'' i.e.,
that have not received all updates from all other replicas, are not guaranteed
to reflect the same value upon being queried.\footnote{Perhaps these messages
were delayed or dropped in transit, or otherwise the other replicas have not
broadcast their updates yet. The latter is uncommon in traditional state-based
\CRDTs, but is an often-used operation in variants of state-based \CRDTs
(including $\delta$-state \CRDTs) where updates are bundled into
\emph{intervals} which are sent in a way that preserves causality of updates.}
Concretely, suppose that an \textsf{inc} has occurred at at least one other
replica which has not yet broadcast its updated state. The replica being queried
will have therefore not yet merged the updated state from the replica(s)
receiving \textsf{inc},\footnote{Because we cannot merge updates we do not know
about.} and so those update(s) will not be reflected in the value returned by
querying.

We present a state-based G-Counter \CRDT for concreteness, and then discuss its
definition:

\begin{figure}[H]
  \centering
  \[
    \textsf{G-Counter}_s = \left\{\begin{aligned}
      S &: \mathbb{N}_0^{|\mathcal{I}|} \\
      s^0 &: \left[ 0, 0, \cdots, 0 \right] \\
      q &: \lambda s.\, \sum_{i \in \mathcal{I}} s(i) \\
      u &: \lambda s,i.\, s\left\{ i \mapsto s(i) + 1 \right\} \\
      m &: \lambda s_1, s_2.\, \left[ \max\left\{ s_1(i), s_2(i) \right\}: i \in \mathsf{dom}(s_1) \cup
      \mathsf{dom}(s_2) \right]
    \end{aligned}\right.
  \]
  \caption{Specification of a state-based \textsf{G-Counter} \CRDT.}
  \label{fig:state-gcounter}
\end{figure}

Notice that the state space $\mathbb{N}^{|\mathcal{I}|}_0$ does not match the
return type of the query function, $q$, which is simply $\mathbb{N}_0$. In
Figures~\ref{fig:state-gcounter} and~\ref{fig:state-gcounter-vec-example}, we
utilize a \emph{vector} counter, which should be familiar to readers acquainted
with \emph{vector clocks}~\citep{lamport78}.\footnote{Unlike traditional vector
clocks, the vector \emph{counter} only stores in each replica's slot the number
of \textsf{inc} operations performed \emph{at that replica}.}

\begin{figure}[H]
  \centering
  \includegraphics[width=.6\textwidth]{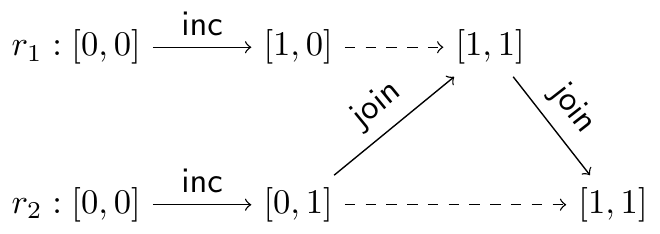}
  \caption{A correct execution of vector-based state G-Counters exchanging
    updates.}
  \label{fig:state-gcounter-vec-example}
\end{figure}

When an \textsf{inc} is invoked at the $i$th replica, it updates its own state
to increment by one the vector element associated with the $i$th replica, here
denoted $s\{i \mapsto s(i) + 1\}$. Finally, upon receiving an update from
another replica, the pair-wise maximum is taken on each of the vector elements.
Note that this is a commutative, associative, and idempotent operation, and so
it forms the least upper-bound of a lattice of vectors of natural numbers.

\subsection{op-based G-Counter}

In the op-based variant of the G-Counter, we can
rely on a delivery semantics $P$ which guarantees at-most-once message
delivery.\footnote{That is, the network is allowed to drop, reorder, and delay
messages, but a single message will never be delivered more than once.} From
this, we say that replicas which are ``behind'' have not yet received the set of
all \textsf{inc} operations performed at other replicas. Replicas which are
``behind'' may ``catch up'' when they receive the set of undelivered messages.
However, these replicas never are ``ahead'' of any other replica, i.e., they
never receive a message which doesn't correspond to a single \textsf{inc}
operation at some other replica, thus they need not be idempotent.

We present now the full definition of the op-based G-Counter:

\begin{figure}[H]
  \centering
  \[
    \textsf{G-Counter}_o = \left\{\begin{aligned}
      S &: \mathbb{N}_0 \\
      s^0 &: 0 \\
      q &: \lambda s.\, s \\
      t &: \textsf{inc} \\
      u &: \lambda s,p.\, s + 1 \\
    \end{aligned}\right.
  \]
  \caption{Specification of an op-based \textsf{G-Counter} \CRDT.}
\end{figure}

Because replicas are sometimes behind but never ahead, we know that the number
of messages received at any given replica is no greater than the sum of the
number of \textsf{inc} operations performed at other replicas, and the number of
\textsf{inc} operations performed locally. So, the op-based G-Counter needs only
to keep track of the number of \textsf{inc} operations it knows about globally,
and this can be done using a single natural number. Hence, $S = \mathbb{N}_0$,
and the bottom state is $0$.

The query operation $q$ is as straightforward as returning the current state.
The \emph{prepare-update} function $t$ always produces the sentinel
\textsf{inc}, indicating that an increment operation should be performed at the
receiving replica. Finally, $u$ takes a state and an arbitrary
payload\footnote{Unused in the implementation here, since the only operation is
\textsf{inc}.} and returns the successor.

Another approach to specifying the op-based G-Counter \CRDT would be to more
closely mirror the state-space of its state-based counterpart, as follows:

\begin{figure}[H]
  \centering
  \[
    \textsf{G-Counter}_o' = \left\{\begin{aligned}
      S &: \mathbb{N}_0^{|\mathcal{I}|} \\
      s^0 &: [ 0, 0, \cdots, 0 ] \\
      q &: \lambda s.\, \sum_{i \in \mathcal{I}} s(i) \\
      t &: (\textsf{inc}, i)) \\
      u &: \lambda s,p.\, s\{ i \mapsto s(i) + 1 \} \\
    \end{aligned}\right.
  \]
  \caption{Alternative specification of an op-based \textsf{G-Counter} \CRDT.}
\end{figure}

where $i$ represents the local node's identifier. Note that, while correct,
restrictive delivery semantics $P$ do not require such a verbose specification,
since the at-most-once delivery guarantees allow us to simply increment our
local count each time we receive an update, since no updates are duplicated over
the network.

\subsection{$\delta$-state based G-Counter}
We conclude this subsection by turning our attention to the $\delta$-state based
G-Counter. We begin first by presenting its full definition:

\begin{figure}[H]
  \centering
  \[
    \textsf{G-Counter}_\delta = \left\{\begin{aligned}
      S &: \mathbb{N}_0^{|\mathcal{I}|} \\
      s^0 &: \left[ 0, 0, \cdots, 0 \right] \\
      q^\delta &: \lambda s.\, \sum_{i \in \mathcal{I}} s(i) \\
      u^\delta &: \lambda s,i.\, \left\{ i \mapsto s(i) + 1 \right\} \\
      m^\delta &: \lambda s_1, s_2.\, \left\{ \max\left\{ s_1(i), s_2(i) \right\}: i \in \mathsf{dom}(s_1) \cup
      \mathsf{dom}(s_2) \right\}
    \end{aligned}\right.
  \]
  \caption{Specification of a $\delta$-state based \textsf{G-Counter} \CRDT.}
\end{figure}

It is worth mentioning the extreme levels of similarity it shares with its
state-based counterpart. Like the state-based G-Counter, the $\delta$-state
based G-Counter uses the state-space $\mathbb{N}^{|\mathcal{I}|}_0$, and has
$s^0 = [0, 0, \cdots, 0]$. Its query operation and merge are defined
identically.

However, unlike the state-based G-Counter, the $\delta$-state based G-Counter
implements the update function as $\lambda s,i.\, \{ i \mapsto s(i) + 1\}$. That
is, instead of returning the amended map (recall: $s\{ \cdots \}$), the
$\delta$-state based G-Counter returns the \emph{singleton map} containing
\emph{only} the updated index. Because of the definition of $m$ (namely, that it
does a pairwise maximum over the \emph{union} of the domains of the two states),
sending the singleton map is equivalent to sending the full map with all other
entries being equal.

\begin{figure}[H]
  \centering
  \includegraphics[width=.7\textwidth]{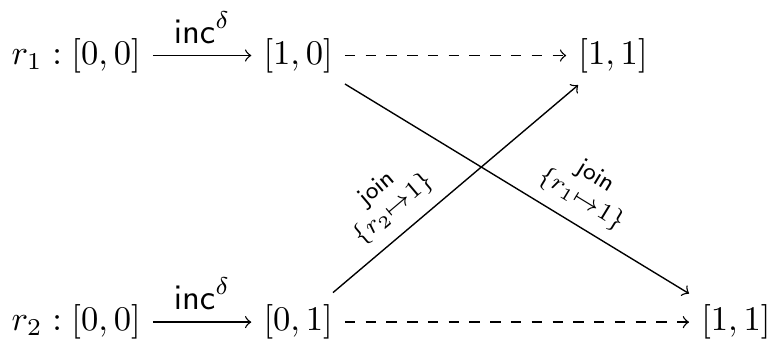}
  \caption{A pair of vector-based $\delta$-state G-Counters replicas exchanging
    updates with each other.}
\end{figure}

This follows from the facts that: (1) the entry being updated has the same
pairwise maximum independent of all other entries in the map, and (2) the
pairwise maximum of all \emph{other} entries does not depend on the updated
entry. So, taking the pairwise maximum of any state with the singleton map
containing one updated value is equivalent to taking the pairwise maximum with
our own state modulo one updated value. $m$ is therefore referred to as a
\emph{$\delta$-mutator}, and the value it returns is an \emph{$\delta$
mutation}~\citep{almedia18}.

This principle of sending \emph{smaller} states (the $\delta$ mutations) which
communicate only the \emph{changed} information is a general principle which
we will return to in the remaining example.

\section{Example: G-Set}
\label{sec:example-gset}

The G-Set is the other primitive \CRDT that we study in this thesis. In essence,
the G-Set is a \emph{monotonic set}. In other words, the G-Set supports the
insertion and query operations, but does not support item removal. This
is a natural consequence of the state needing to form a monotone semi-lattice,
where set deletion would destroy the lattice structure.\footnote{To support
removal from a \CRDT-backed set, the 2P-Set is often used. Verifying this object
is left to future work, which we discuss in
Section~\ref{sec:future-pair-locale}.}

\subsection{State-based G-Set}

We begin our discussion with the state-based G-Set \CRDT, the definition of
which we present below. This is our first example of a \emph{parametric} \CRDT
instance, where the type of the \CRDT is defined in terms of the underlying set
of items that it supports.

\begin{figure}[H]
  \centering
  \[
    \textsf{G-Set}_s(\mathcal{X}) = \left\{\begin{aligned}
      S &: \mathcal{P}(\mathcal{X}) \\
      s^0 &: \{ \} \\
      q &: \lambda x.\, x \in s \\
      u &: \lambda x.\, s \cup \{ x \} \\
      m &: \lambda s_1, s_2.\, s_1 \cup s_2 \\
    \end{aligned}\right.
  \]
  \caption{state-based \textsf{G-Set} \CRDT}
  \label{fig:gset-state}
\end{figure}

For some set $\mathcal{X}$, we can consider the state-based G-Set \CRDT
instantiated over it, $\textsf{G-Set}_s(\mathcal{X})$. The state-space of this
\CRDT is the power set of $\mathcal{X}$, which we denote
$\mathcal{P}(\mathcal{X})$. Initially, the G-Set begins as the empty set, here
denoted $\{ \}$. The three operations are defined as follows:
\begin{itemize}
  \item The query function $q$ is an unary relation, i.e., it determines which
    elements are contained in the G-Set.
  \item The update function $u$ produces the updated set formed by taking the
    union of the existing set, and the singleton set containing the item
    to-be-added.
  \item Finally, the merge function $m$ takes the union of two sets.
\end{itemize}
Note crucially that the merge function $\cup$ defines the least upper-bound of
two sets, and thus endows our \CRDT with a lattice structure. In this lattice of
sets, we say that for some set $\mathcal{X}$, the lattice formed is $\langle
\mathcal{P}(\mathcal{X}), \subseteq \rangle$.

\subsection{op-based G-Set}
In the op-based variant of the G-Set \CRDT, we replace the state-based \CRDT's
update function $u$ with the op-based pair $(t,u)$. The state space, initial
state, as well as the query and merge functions ($q$ and $m$, respectively) are
defined identically. We present the full definition as follows:

\begin{figure}[H]
  \centering
  \[
    \textsf{G-Set}_o(\mathcal{X}) = \left\{\begin{aligned}
      S &: \mathcal{P}(\mathcal{X}) \\
      s^0 &: \{ \} \\
      q &: \lambda x.\, x \in s \\
      t &: \lambda x.\, (\textsf{ins}, x) \\
      u &: \lambda p.\, s \cup \{(\textsf{snd}~p)\} \\
      m &: \lambda s_1, s_2.\, s_1 \cup s_2 \\
    \end{aligned}\right.
  \]
  \caption{op-based \textsf{G-Set} \CRDT}
\end{figure}

The only difference between this \CRDT instantiation and the state-based one is
in the definition of $(t,u)$.\footnote{This is a pattern that will become
familiar during Chapter~\ref{chap:example-crdts}.} In the state-based \CRDT, we
sent the updated state, i.e., $s \cup \{ x \}$. In the op-based variant, we send
a \emph{representation} of the effect, which we take to be the pair
$(\textsf{ins}, x)$, where $\textsf{ins}$ is a sentinel marker indicating that
the second element in the pair should be inserted.

Upon receipt of the message $(\textsf{ins}, x)$, our op-based G-Set \CRDT
computes the new state $s \cup \{ (\snd~p) \}$, where $p$ is the message
payload.

\subsection{$\delta$-state based G-Set}
Finally, we turn our attention to the $\delta$-state based G-Set \CRDT. As was
the case with the $\delta$-state based G-Counter \CRDT, this object is defined
identically as to the state-based counter, with the notable exception of its
update function, $u$.\footnote{This again will be another familiar pattern in
Chapter~\ref{chap:example-crdts}.}

For full formality, we present its definition below:

\begin{figure}[H]
  \centering
  \[
    \textsf{G-Set}_\delta(\mathcal{X}) = \left\{\begin{aligned}
      S &: \mathcal{P}(\mathcal{X}) \\
      s^0 &: \{ \} \\
      q^\delta &: \lambda x.\, x \in s \\
      u^\delta &: \lambda x.\, \{ x \} \\
      m^\delta &: \lambda s_1, s_2.\, s_1 \cup s_2 \\
    \end{aligned}\right.
  \]
  \caption{$\delta$-state based \textsf{G-Set} \CRDT}
\end{figure}

Here, the only difference is between the state- and $\delta$-state based \CRDT's
definition of the update method, $u$. In the state-based G-Set, update was
defined as $u : \lambda x.\, s \cup \{ x \}$. But in the $\delta$-state based
G-Set, the update is defined as $u : \lambda x.\, \{ x \}$. Note crucially that
these two kinds of updates are equal when applied to the same local state.
Consider a state- and $\delta$-state based G-Set, both starting at the same
state $s^t$. For the state-based G-Set, we have:
\[
  \begin{aligned}
    m(s^t, u(x))
      &= s^t \cup (s^t \cup \{ x \}) \\
      &= (s^t \cup s^t) \cup \{ x \} \\
      &= s^t \cup \{ x \} \\
  \end{aligned}
\]
whereas for the $\delta$-based G-Set, we have directly:
\[
  m^\delta(s^t, u^\delta(x)) = s^t \cup \{ x \}
\]

  \chapter{\CRDT reductions}
\label{chap:crdt-reductions}

This chapter outlines the key component of our proof strategy. We begin with a
reduction allowing us to convert from state- to op-based \CRDTs. This reduction
is used in Chapter~\ref{chap:example-crdts} to show a preliminary encoding of
two state-based \CRDTs. We conclude with a reduction from $\delta$-state to
op-based \CRDTs, which is used extensively in the latter part of
Chapter~\ref{chap:example-crdts} to show that $\delta$-state \CRDTs achieve
\SEC.

Specifically, we will discuss the following:
\begin{itemize}
  \item In Section~\ref{sec:state-as-op}, we will describe a mapping
    $\phi_{\text{state} \to \text{op}}$ to reduce state-based \CRDTs to op-based
    \CRDTs.
  \item In Section~\ref{sec:delta-as-op}, we will describe a mapping
    $\phi_{\delta \to \text{op}}$ to reduce $\delta$-state \CRDTs to op-based
    \CRDTs.
\end{itemize}

We state these reductions as ``maxims''. They are stated here in brief, but we
will return to them in Sections~\ref{sec:state-as-op} and~\ref{sec:delta-as-op}.

\begin{maxim}
  \label{maxim:state-as-op}
  A state-based \CRDT is an op-based \CRDT where the \emph{prepare-update} phase
  returns the updated state, and the \emph{effect-update} is a join of two
  states.
\end{maxim}

\begin{maxim}
  \label{maxim:delta-as-op}
  A $\delta$-state based \CRDT is an op-based \CRDT whose messages are
  $\delta$-fragments, and whose operation is a pseudo-join between the current
  state and the $\delta$ fragment.
\end{maxim}

\section{state-based \CRDTs as op-based}
\label{sec:state-as-op}

This section describes a reduction from state-based \CRDTs to op-based \CRDTs.
We describe this reduction to exemplify how to reduce between \CRDT classes, and
use this in Chapter~\ref{chap:example-crdts} to show that two state-based \CRDTs
achieve \SEC.

Consider some state-based \CRDT $C = (S, s^0, q, u, m)$. This object $C$ has a
set of states $S$, an initial state $s^0$, along with functions for querying the
state ($q$), updating its state ($u$), and merging its state with the state of
some other object $(m)$. Our question is to define a mapping $\phi$ as
follows:
\[
  \phi_{\text{state} \to \text{op}} :
    \underbrace{(S, s^0, q, u, m)}_{\text{state-based \CRDTs}} \longrightarrow
    \underbrace{(S, s^0, q, t, u, P)}_{\text{op-based \CRDTs}}
\]
For our purposes, we view $\phi_{\text{state} \to \text{op}}$ as a homomorphism
between state- and op-based \CRDTs.

Note that $P$ (the delivery precondition on the right-hand side) is the only
element which does not have a natural analog on the left-hand side.
Traditionally it is common to have a $P$ which preserves causality, but this is
not necessary for our proofs (since we map states identically as in the
following section). Therefore, we assume that $P$ is always met, in which case
delivery can always occur immediately on the right-hand side.

We'll now turn to describing the details of $\phi_{\text{state} \to \text{op}}$,
which for convenience in this section, we'll abbreviate as simply
$\phi$.\footnote{In the following section, we'll define a new homomorphism
between op- and $\delta$-state based \CRDTs, at which point we will
distinguish between the two mappings when it is unclear which is being referred
to.} To understand $\phi$, we'll consider how it maps the state $S$ (along with
$s^0$ and $q$) separately from how it maps the update procedure $u$.

\subsection{Mapping states under $\phi$}

Let us begin our discussion with a consideration to how $\phi$ maps the state
$S$ from a state-based \CRDT to an op-based one. In practice, it would be
unrealistic to treat the state space of a state-based \CRDT as equal to that of
its op-based counterpart. Doing so would discard one of the key benefits of
op-based \CRDTs over state-based ones, which is that they are often able to
represent the same set of query-able states using simpler structures. For
example, state-based counters (such as the $\textsf{G-Counter}_s$ and
$\textsf{PN-Counter}_s$) often use a vector representation to represent the
number of ``increment'' operations at each node, but op-based counters often
instead use scalars (cf., the examples in Section~\ref{sec:example-gcounter}).

In order to make $\phi$ a simple reduction, we allow the state spaces of the
\CRDT before and after the reduction to be identical. Though \CRDT designers can
often be more clever than this in practice, this makes reasoning about the
transformation much simpler for the purposes of our proofs. Likewise, since the
query function $q$ is defined in terms of the state-space, $S$, we let $\phi$
preserve the implementation of $q$ under the mapping, too.

\subsection{Mapping updates under $\phi$}
\label{sec:states-under-phi}
Now that we have described the process by which $\phi$ maps $S$, $s^0$, and $q$,
we still need to address the implementation of $u$ and $m$ under mapping. Our
guiding principle is the following theorem (which we state and discuss here, but
have not mechanized):
\begin{theorem}
  Let $C_s$ be a state-based \CRDT with $C_s = (S, s^0, q, u, m)$. Define an
  op-based \CRDT $C_o$ as follows:
  \[
    C_0 = \left\{ \begin{aligned}
      S_o &: S \\
      s^0_o &: s^0 \\
      q_o &: q \\
      t_o &: \lambda p.\, u(p...) \\
      u_o &: \lambda s_2.\, m(s^t, s_2) \\
    \end{aligned} \right.
  \]
  then, $C_o$ and $C_s$ reach equivalent states given equivalent updates and
  delivery semantics.
\end{theorem}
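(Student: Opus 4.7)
The plan is to establish that corresponding executions of $C_s$ and $C_o$ produce replicas in equivalent states whenever they have processed the same underlying set of update payloads. I would proceed by induction on the length of a jointly indexed trace, mapping each event in a $C_s$ execution (local update, broadcast, delivery, or query) to a canonical set of events in $C_o$. Local updates map one-to-one with the same replica and payload $p$; a broadcast in $C_s$ of the current state $s_j$ corresponds to the block of individual op-messages $t_o(p_1), \ldots, t_o(p_k)$ that $j$ has produced since its last synchronization with the receiver; deliveries match these blocks element-wise; and queries at corresponding points should yield equal results.

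The first technical fact I would pin down is that $u$ is \emph{inflationary} in the lattice, i.e.\ that $s \sqsubseteq u(s,p)$ for every $s \in S$ and payload $p$. This is the standard monotonicity assumption on state-based \CRDT mutators and, combined with idempotency of $m = \sqcup$, yields the absorption identity $m(s, u(s,p)) = u(s,p)$. With this, the local-update case is immediate: in $C_s$ the replica transitions $s \to u(s,p)$, while in $C_o$ it first prepares the message $u(s,p)$ via $t_o$ and then applies $u_o$ locally, producing $m(s, u(s,p)) = u(s,p)$. The query case is immediate from $q_o = q$ together with the inductive equality of states.

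For the remote-delivery case I would argue by causal-history equivalence rather than pointwise trace equality. In $C_s$, a delivered message is the full state $s_j$, which by the inductive hypothesis equals the join $s^0 \sqcup \bigsqcup_k u(s_j^{k-1}, p_k)$ of all updates applied at $j$ so far. In $C_o$, the corresponding messages are exactly these individual prepared states $u(s_j^{k-1}, p_k)$. Commutativity, associativity, and idempotency of $\sqcup$ then collapse the difference: joining the receiver's state with $s_j$ on the one hand, or sequentially joining it with each $u(s_j^{k-1}, p_k)$ on the other, produce the same element of $S$. Dropped, reordered, or duplicated deliveries are absorbed by the same lattice properties, so the correspondence is robust across the weak delivery semantics the paper considers.

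The main obstacle I expect is precisely this mismatch in message granularity: $C_s$ ships whole states at arbitrary broadcast points while $C_o$ ships one state per update, so a literal bijection on trace events is not possible. The cleanest way around this, and the one I would pursue, is to work at the level of causal histories and show that any state reachable by $C_s$ is the join of the same multiset of update payloads (up to absorption) as some state reachable by $C_o$, and then invoke the already-established \SEC guarantees for both models to conclude that replicas which have observed the same set of underlying updates must agree. A subsidiary difficulty is making precise what ``equivalent updates and delivery semantics'' means as a relation between two different message alphabets; I would define it as the existence of a delivery schedule for $C_o$ whose per-replica causal histories of payloads agree with those induced by $C_s$'s schedule, after which the lattice argument above closes the proof.
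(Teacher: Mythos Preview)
Your proposal is correct and considerably more thorough than what the paper actually offers. The paper presents only a three-line proof sketch: it notes that $s^0_o = s^0$ gives equal initial states, that $q_o = q$ makes queries agree whenever states agree, and that preparing an update as the full post-update state and then merging it ``preserves the equality.'' That is the entire argument; the paper explicitly flags that this result is not mechanized and serves mainly to motivate Maxim~1.

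The main place your plan diverges is the message-granularity obstacle you raise. In the paper's construction, $t_o(p) = u(p\ldots)$ already returns the \emph{full} updated state, so every op-message in $C_o$ is itself a complete state, not a small increment. Under the paper's intended reading of ``equivalent updates and delivery semantics''---essentially, $C_s$ broadcasts after each update just as $C_o$ does---the two systems send identical payloads and a lock-step simulation goes through without needing to decompose a state broadcast into a block of per-update messages. Your more elaborate causal-history argument, and the inflationary/absorption lemmas you set up, are not wrong, but they solve a harder version of the problem than the paper actually poses. If you align the broadcast schedules, the inductive step collapses to the single observation $m(s, u(s,p)) = u(s,p)$, which is exactly what the paper's sketch gestures at without naming the absorption law explicitly.
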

\begin{proof}[Proof sketch]
By simulation. Since $s^0_o = s^0$, both objects begin in the same state. Since
$q_o = q$, if the state of $C_o$ and $C_s$ are equal, then $q_o$ will reflect as
much. Finally, an update is prepared locally by computing the updated
state-based representation. That update is applied both locally and at all
replicas by merging the prepared state into $C_o$'s own state, preserving the
equality.
\end{proof}

In other words, we decompose the \textit{update} function of a state-based \CRDT
into the \textit{prepare-update} and \textit{effect-update} functions of an
op-based \CRDT. Let $p$ be the set of parameters used to invoke the update
function $u$ of a state-based \CRDT, i.e., that $u(p...)$ produces the desired
updated state. Then the \textit{prepare-update} returns a serialized
representation of $u(p...)$, which is to say that it returns the updated state.
The \textit{effect-update} implementation then takes that representation and
applies it by invoking the merge function $m$ with the effect representation and
its own state to produce the new state.

This introduces Maxim~\ref{maxim:state-as-op}, which unifies state- and op-based
\CRDTs as behaving identically when the op-based \CRDT performs a join of two
states. We restate this Maxim for clarity:

\setcounter{maxim}{0}
\begin{maxim}
  A state-based \CRDT is an op-based \CRDT where the \emph{prepare-update} phase
  returns the updated state, and the \emph{effect-update} is a join of two
  states.
\end{maxim}

\section{$\delta$-state based \CRDTs as op-based}
\label{sec:delta-as-op}

In the previous section, we described a general procedure for converting
state-based \CRDTs into op-based \CRDTs. In this section, we treat the insight
from the previous section as guidance for how to design a similar reduction to
convert $\delta$-state \CRDTs into op-based \CRDTs. We will use this reduction
to encode $\delta$-state \CRDTs into the library presented in~\citet{gomes17} in
order to verify that $\delta$-state \CRDTs are \SEC.

Similarly as in the previous section, we describe a (new) mapping $\phi_{\delta
\to \text{op}}$ of type:
\[
  \phi_{\delta \to \text{op}} :
    \underbrace{(S, s^0, q, u^\delta, m^\delta)}_{\text{$\delta$-based \CRDTs}}
    \longrightarrow
    \underbrace{(S, s^0, q, t, u, P)}_{\text{op-based \CRDTs}}
\]

For the same reasons as in Section~\ref{sec:states-under-phi}, we let $\phi$
preserve the state space, initial state, and query function. Again, we let $P$
be the delivery precondition which is always met (since messages exchanged are
idempotent, and so there is no need to preserve either causality or at-most-once
delivery as is traditional).

In Section~\ref{sec:state-as-op}, we treated a state-based \CRDT's state as
the representation of the effect for an operation-based \CRDT. In this section,
we do the same for the \emph{$\delta$-state fragment}, which we naturally think
as a difference of two states.

Concretely, let $t : S \to S \to T$ for some type $T$ not necessarily equal to
$S$ which represents the type of all $\delta$-fragments. We define two examples
as follows:
\begin{itemize}
  \item For the G-Set \CRDT, the $\delta$ mutator, $m^\delta$ produces the
    singleton set containing the element added in the last operation. Since only
    one item can be added at a time, computing the following with the before-
    and after-states is sufficient to generate the representation:
    \[
      t = \lambda s_1, s_2.\, s_2 \setminus s_1
    \]
    where $S = T = \mathcal{P}(\mathcal{X})$. This is an example where the \CRDT
    has a type where both the state- and $\delta$-state fragments are members of
    $S$.
  \item For the G-Counter \CRDT, the $\delta$ mutator produces a pair
    type containing the identifier of a node with a changed value, and the new
    value which is assigned to that identifier. $t$ is defined as:
    \[
      t = \lambda s_1, s_2.\, \min_{\substack{i \in \mathcal{I} \\ s_1[i] \ne
        s_2[i]}} (i, s_2[i])
    \]
    (Observe that this function is not defined for two states $s_1 = s_2$, nor
    does it need to be, since the before- and after states are guaranteed to be
    different after invoking $u^\delta$).

    Here we have an example of $T \ne S$, where $T$ instead equals
    $\isacharprime\isa{id} \times \mathbb{N}$.
\end{itemize}

$t$ is now capable of generating the $\delta$-state fragment corresponding to
any pair of states from before and after and invocation of $u^\delta$. Now we
need to define the op-based \CRDT's implementation of $u$ to recover a new state
given a value of type $T$. Here, let $u : S \to T \to S$, which takes in a
current state as well as a $\delta$-fragment and produces a new state.

Intuitively, $u$ is a sort of inverse over the last argument and return value of
$t$. That is, where $t$ was taking the difference of two states, $u$ recovers
that difference into a new state. We define two example implementations of $u$
as follows:
\begin{itemize}
  \item For the G-Set \CRDT, the new state is recovered by taking the union of
    the current state, along with the state carrying the new item. That is:
    \[
      u = \lambda s, t.\, s \cup t
    \]
  \item For the G-Counter \CRDT, the new state is recovered by taking the old
    state, and replacing the entry whose index is equal to the first part of an
    update with the value described by the second part of that update.
\end{itemize}

Importantly, $u$ and $t$ needs to satisfy three important properties:
\begin{enumerate}
  \item $u$ and $t$ can never work together to produce a state which is not
    by either the current state, or the $\delta$-fragment. That is, for any
    state $s'$, we must have that:
    \[
      \forall s \sqsubseteq s'.\, u(s, t(s, s')) \sqsubseteq s'
    \]
    Or in other words, if our starting state is lower in the lattice than $s'$,
    taking the $\delta$-fragment between $s$ and $s'$ and then re-applying that
    to $s$ cannot produce a new state which is $\sqsupseteq s'$.
  \item At all times, all replicas must reflect all updates performed at that
    replica.
  \item All replicas which have received the same set of messages have the same
    state.
\end{enumerate}

Together, these properties are sufficient to re-introduce
Maxim~\ref{maxim:delta-as-op}, which we restate here for clarity:
\setcounter{maxim}{1}
\begin{maxim}
  A $\delta$-state based \CRDT is an op-based \CRDT whose messages are
  $\delta$-fragments, and whose operation is a pseudo-join between the current
  state, and the $\delta$ fragment.
\end{maxim}

Therefore, we have a straightforward procedure for reasoning about
$\delta$-state \CRDTs in terms of op-based \CRDTs, which is to convert any
$\delta$-state \CRDT into an op-based \CRDT, and then use the existing framework
of~\citet{gomes17} to mechanize that that \CRDT achieves \SEC.

  \chapter{Example \CRDTs under Relaxed Network Model}
\label{chap:example-crdts}

We have mechanized proofs that two state- and $\delta$-state based \CRDTs
achieve \SEC. We relax the underlying network model to support non-unique
messages (Section~\ref{sec:network-relaxations}), and then showed that both the
state- and $\delta$-state based G-Counter and G-Set inhabit \SEC
(Sections~\ref{sec:isabelle-state-crdts} and~\ref{sec:isabelle-delta-crdts}).
Finally, we present an alternative encoding of the reduction in
Chapter~\ref{chap:crdt-reductions} for $\delta$-state \CRDTs
(Section~\ref{sec:alternate-delta-encoding}).

\section{Network Relaxations}
\label{sec:network-relaxations}
In~\citet{gomes17}, Gomes and his co-authors provided a network model which
makes the following set of assumptions:
\begin{enumerate}
  \item All messages received by some node were broadcast by some other node.
  \item All messages broadcast by some node were received by that node (i.e.,
    all messages are delivered locally in a reliable fashion).
  \item All messages are unique.
\end{enumerate}
These assumptions allow the network to drop, reorder, and delay messages in
transit.

Because op-based \CRDTs only deliver updates once, it is traditional to assume a
delivery relation $P$ which predicates the set of network executions that we are
allowed to reason about. For example, a network execution which drops all
messages in transit, or does not preserve causality cannot be shown to exhibit
\SEC, and so it is not a member of the relation $P$.  Such an assumption is
standard in the literature and goes back to the original work
in~\citet{shapiro11}.

In~\citet{gomes17}, the authors make extensive use of Isabelle's \emph{locale}
feature~\citep{wenzel02}, which for our purposes we can consider as Isabelle's
implementation of parametric proofs. Specifically, \citet{gomes17} define a
locale for \SEC, which they call
$\isa{strong}\isacharunderscore\isa{eventual}\isacharunderscore\isa{consistency}$.
To instantiate this locale, \CRDT replicas must meet the following
preconditions:

\begin{itemize}
  \item Messages which have a causal dependence are delivered in-order;
    concurrent messages may be delivered in any order (i.e., the $\prec$
    relation is preserved during delivery).
  \item The set of messages delivered at each node is distinct.\footnote{Note
    that the messages transited by the network may be non-distinct. This is
    another standard assumption which can be implemented by tagging each message
    with a vector clock or assigning a globally unique identifier, and having
    each receiving node discard duplicates.}
  \item That concurrent operations commute.
  \item That correct nodes do not fail, i.e., that they remain responsive during
    the execution.
\end{itemize}

While we consider the above to be a reasonable delivery semantics, we wish to
relax the network model in order to support duplicated messages. This
behavior is not permitted by the original network model in~\citet{gomes17},
which assumes that each message in transit on the network has a unique
identifier.

To see this, consider the following example:

\begin{example}
  \label{example:state-op-dup-msgs}
  Consider two systems which have multiple replicas of \CRDT counters. System
  $A$ uses op-based counters, and system $B$ uses state-based counters. Consider
  two replicas in each system, call these $r_1$ and $r_2$. Suppose the following
  happens in each system:

  \begin{itemize}
    \item A \textsf{inc} operation is performed at replica $r_1$, which causes a
      message to be sent to all other replicas. In system $A$, this message is
      $[1, 0, \cdots, 0]$, and in system $B$ this messages is $\textsf{inc}$.
    \item While in route to replica $r_2$, this message is duplicated, and both
      copies are received at replica $r_2$.
  \end{itemize}

  Notice that $q(r_2)$ results in a different value based on whether or not you
  queried the replica belonging to system $A$ or system $B$. In system $A$, the
  duplicate message is ``ignored,'' since merging the same message twice is
  idempotent due to $\sqcup$, and $q(r_2) = 1$ as expected. In system $B$, the
  additional update \emph{is} applied, meaning that $q(r_2) = 2$, which is a
  safety violation.
\end{example}

So, while it is often a safety violation for an op-based \CRDT to receive the
same message twice,\footnote{This is the primary reason why it is a standard
assumption of op-based network models to disallow non-unique messages}
state- and $\delta$-state based \CRDTs can and should tolerate this class of
degenerate behaviors.

The general principle is as follows:
\begin{theorem} \label{thm:state-sec-dup}
  State-based \CRDTs exhibit \SEC even when operating in
  a network environment permitting non-unique messages.
\end{theorem}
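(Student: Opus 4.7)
The plan is to reduce to the existing framework of~\citet{gomes17} via Maxim~\ref{maxim:state-as-op}, treating the state-based \CRDT as an op-based \CRDT whose \emph{prepare-update} ships the updated state and whose \emph{effect-update} is the semi-lattice join $\sqcup$. The first step is to introduce a weakened network locale that inherits the drop/reorder/delay permissions of Gomes et al.\ but omits the assumption that delivered messages are distinct, so that the same message may be received multiple times by the same replica. Every subsequent argument must be carried out against this relaxed locale.

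The crux of the proof is a simulation argument exploiting idempotence. Given an arbitrary execution $E$ admitted by the relaxed locale, I would construct a shadow execution $E'$ in which every duplicate delivery is erased, keeping only the first occurrence of each message at each replica. Because every \emph{effect-update} is $\sqcup$, and because $s \sqcup s' \sqcup s' = s \sqcup s'$ by idempotence, the local state of every replica in $E$ agrees at every point in time with the corresponding state in $E'$. Consequently any invocation of $q$ in $E$ returns the same value as the corresponding invocation in $E'$. Since $E'$ satisfies the original Gomes et al.\ preconditions (messages are now distinct by construction, causal order is inherited, and $\sqcup$ is concurrent-commutative by commutativity of joins), their existing \SEC theorem applies to $E'$, and the simulation transports \SEC back to $E$.

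I expect the main obstacle to be mechanical rather than mathematical. Isabelle's locale mechanism does not permit \emph{in situ} weakening of an already-instantiated locale, so I would need to introduce a fresh locale mirroring $\isa{strong}\isacharunderscore\isa{eventual}\isacharunderscore\isa{consistency}$, re-prove its convergence obligations under the weaker hypotheses, and then establish the simulation as a locale interpretation that carries the \SEC result across. A secondary subtlety is discharging the concurrent-commutativity obligation on the post-reduction op-based \CRDT: because the only \emph{effect-update} is a join, this collapses to commutativity of $\sqcup$, but the proof obligation must still be packaged in the exact form expected by the relaxed locale, which in practice tends to require a few auxiliary lemmas about how duplicated deliveries interleave with genuinely concurrent ones.
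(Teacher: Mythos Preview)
Your proposal is correct in spirit but takes a substantially more elaborate route than the paper. The paper's own proof of this theorem is a three-line induction on the number of times a message $m'$ is received: the base case $i=1$ is trivial, and for $i>1$ the idempotence of $\sqcup$ gives $m \sqcup m' \sqcup \cdots \sqcup m' = m \sqcup m'$ directly. There is no shadow execution, no simulation relation, and no appeal to the Gomes et al.\ theorem at this stage; the theorem functions as a motivating observation rather than a formal result discharged inside the framework.

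Where your proposal and the paper diverge more interestingly is in the mechanization strategy that follows. You propose keeping the original locale intact and reducing to it by constructing a deduplicated execution $E'$ and transporting \SEC back via a locale interpretation. The paper instead edits the locale in place: it deletes the $\isa{msg}\isacharunderscore\isa{id}\isacharunderscore\isa{unique}$ assumption, lets the dependent lemmas break, and repairs them one by one (four lemmas, all closed by $\isakeyword{sledgehammer}$). Your simulation approach is cleaner conceptually and would make the dependence on idempotence explicit at a single point, but it incurs the overhead of defining the erasure map, proving it preserves causal order and the $\isa{prefix}$ relation, and packaging the state-agreement invariant. The paper's approach is blunter but cheaper: it trades a principled reduction for a handful of automated proof repairs, and never needs to reason about two executions at once.
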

\begin{proof}
  By induction on the number of times $i$ a message $m'$ is received. When $i =
  1$, the goal is trivially established. When $i > 1$, the idempotency of
  $\sqcup$ shows that:
  \[
    m \sqcup \underbrace{m' \sqcup \cdots \sqcup m'}_{\text{$i-1$ times}} \sqcup~m'
      = m \sqcup m' \sqcup m'
      = m \sqcup m'
  \]
  where the second equality follows from the inductive hypothesis, and the third
  from the fact that $m' \sqcup m' = m'$ by the idempotency of $\sqcup$.
\end{proof}

This result guides our approach as follows: to show a stronger result that uses
Theorem~\ref{thm:state-sec-dup} (i.e., that state- and $\delta$-state based
\CRDTs achieve \SEC no matter how many times), the network model originally
presented in~\citet{gomes17} should be extended to remove the assumption that
message identifiers are unique.

\subsection{Delivery Semantics}
In their original network model, the authors of~\citet{gomes17} use an Isabelle
\emph{locale} in order to parameterize varying instantiations of the network
based on certain assumptions. They provide the following definition for the
Network locale~\citep{gomes17}:
\begin{figure}[H]
\begin{isabelle}
~~~~~~~~\isakeyword{and}\ \=msg{\isacharunderscore}id{\isacharunderscore}unique{\isacharcolon}\ \={\isasymrbrakk}\ \={\isachardoublequoteopen}Broadcast\ m\ {\isasymin}\ set\ {\isacharparenleft}history\ i{\isacharparenright}\ \=\kill
\isacommand{locale}\ network\ {\isacharequal}\ node{\isacharunderscore}histories\ history\\
~~~~\isakeyword{for}\>history\ {\isacharcolon}{\isacharcolon}\ {\isachardoublequoteopen}nat\ {\isasymRightarrow}\ {\isacharprime}msg\ event\ list{\isachardoublequoteclose}\ {\isacharplus}\\
~~~~\isakeyword{fixes}\>msg{\isacharunderscore}id\ {\isacharcolon}{\isacharcolon}\ {\isachardoublequoteopen}{\isacharprime}msg\ {\isasymRightarrow}\ {\isacharprime}msgid{\isachardoublequoteclose}\\
~~~~\isakeyword{assumes}\ delivery{\isacharunderscore}has{\isacharunderscore}a{\isacharunderscore}cause{\isacharcolon}\\
\>\>{\isasymlbrakk}\ {\isachardoublequoteopen}Deliver\ m\ {\isasymin}\ set\ {\isacharparenleft}history\ i{\isacharparenright}\ \>\>{\isasymrbrakk}\ {\isasymLongrightarrow}\ {\isasymexists}j{\isachardot}\ Broadcast\ m\ {\isasymin}\ set\ {\isacharparenleft}history\ j{\isacharparenright}{\isachardoublequoteclose}\\
~~~~~~~~\isakeyword{and}\>deliver{\isacharunderscore}locally{\isacharcolon}\ \>{\isasymlbrakk}\ \>{\isachardoublequoteopen}Broadcast\ m\ {\isasymin}\ set\ {\isacharparenleft}history\ i{\isacharparenright}\ \>{\isasymrbrakk}\ {\isasymLongrightarrow}\  Broadcast\ m\ {\isasymsqsubset}\isactrlsup i\ Deliver\ m{\isachardoublequoteclose}\\
~~~~~~~~\isakeyword{and}\>msg{\isacharunderscore}id{\isacharunderscore}unique{\isacharcolon}\ \>{\isasymlbrakk}\ \>{\isachardoublequoteopen}Broadcast\ m{\isadigit{1}}\ {\isasymin}\ set\ {\isacharparenleft}history\ i{\isacharparenright};\\
\>\>\>Broadcast\ m{\isadigit{2}}\ {\isasymin}\ set\ {\isacharparenleft}history\ j{\isacharparenright};\\
\>\>\>msg{\isacharunderscore}id\ m{\isadigit{1}}\ {\isacharequal}\ msg{\isacharunderscore}id\ m{\isadigit{2}}\ \>{\isasymrbrakk}\ {\isasymLongrightarrow}\ i\ {\isacharequal}\ j\ {\isasymand}\ m{\isadigit{1}}\ {\isacharequal}\ m{\isadigit{2}}{\isachardoublequoteclose}
\end{isabelle}
\centering
\caption{Isabelle specification of the Network locale as given
  in~\citet{gomes17}.}
\end{figure}

In order to extend the network model of Gomes et. al. to support duplicated
messages, we need to remove the assumption
$\isa{msg}\isacharunderscore\isa{id}\isacharunderscore\isa{unique}$, which
allows the enclosed proofs to assume that messages have unique identifiers.
While this assumption is part of the locale, proofs are allowed to assume that
if two messages $m_1$ and $m_2$ with the same identifier (i.e., that
$\isa{msg}\isacharunderscore\isa{id} m_1 = \isa{msg}\isacharunderscore\isa{id}
m_2$) exists in the history of two nodes, that either the two nodes or two
messages are identical.

Although our proofs are still instantiated after fulfilling the qualifier $P$,
we still wish to reason about an expanded set of network executions which
includes message dropping.\footnote{Since op-based \CRDTs require
causality-preserving semantics $P$, we cannot remove the dependence on $P$
without substantial alternation to the library. We leave this to future work,
and discuss it in greater detail in Chapter~\ref{chap:future-work}.}

For our purposes, we begin by specifying a relaxed $\isa{network}$ locale as
follows:
\begin{isabelle}
~~~~~~~~\isakeyword{and}\ \=msg{\isacharunderscore}id{\isacharunderscore}unique{\isacharcolon}\ \={\isasymrbrakk}\ \={\isachardoublequoteopen}Broadcast\ m\ {\isasymin}\ set\ {\isacharparenleft}history\ i{\isacharparenright}\ \=\kill
\isacommand{locale}\ network\ {\isacharequal}\ node{\isacharunderscore}histories\ history\\
~~~~\isakeyword{for}\>history\ {\isacharcolon}{\isacharcolon}\ {\isachardoublequoteopen}nat\ {\isasymRightarrow}\ {\isacharprime}msg\ event\ list{\isachardoublequoteclose}\ {\isacharplus}\\
~~~~\isakeyword{fixes}\>msg{\isacharunderscore}id\ {\isacharcolon}{\isacharcolon}\ {\isachardoublequoteopen}{\isacharprime}msg\ {\isasymRightarrow}\ {\isacharprime}msgid{\isachardoublequoteclose}\\
~~~~\isakeyword{assumes}\ delivery{\isacharunderscore}has{\isacharunderscore}a{\isacharunderscore}cause{\isacharcolon}\\
\>\>{\isasymlbrakk}\ {\isachardoublequoteopen}Deliver\ m\ {\isasymin}\ set\ {\isacharparenleft}history\ i{\isacharparenright}\ \>\>{\isasymrbrakk}\ {\isasymLongrightarrow}\ {\isasymexists}j{\isachardot}\ Broadcast\ m\ {\isasymin}\ set\ {\isacharparenleft}history\ j{\isacharparenright}{\isachardoublequoteclose}\\
~~~~~~~~\isakeyword{and}\>deliver{\isacharunderscore}locally{\isacharcolon}\ \>{\isasymlbrakk}\ \>{\isachardoublequoteopen}Broadcast\ m\ {\isasymin}\ set\ {\isacharparenleft}history\ i{\isacharparenright}\ \>{\isasymrbrakk}\ {\isasymLongrightarrow}\  Broadcast\ m\ {\isasymsqsubset}\isactrlsup i\ Deliver\ m{\isachardoublequoteclose}\\
\end{isabelle}

Removing this assumption immediately invalidates many of the proofs contained
within the $\isa{network}$ locale. These proofs are broken due to a variety of
reasons, ranging from something as simple as referencing a now-missing
assumption, to more complex issues, e.g., a proof which relies on the uniqueness
of delivered messages.

We now describe our strategy for repairing these proofs:
\begin{enumerate}
  \item First, remove the assumption
    $\isa{msg}\isacharunderscore\isa{id}\isacharunderscore\isa{unique}$ from the
    $\isa{network}\isacharunderscore\isa{with}\isacharunderscore\isa{ops}$
    locale, as above.
  \item Identify the set of broken proofs. In each broken proof, do the
    following:
    \begin{enumerate}
      \item Identify the earliest broken proof step.
      \item Delete it and all proof steps following it.
      \item Replace the proof body with the term $\isakeyword{sorry}$.
    \end{enumerate}
  \item In any order, consider a proof which ends with $\isakeyword{sorry}$, and
    repair the proof.
\end{enumerate}

In total, there were four (4) key lemmas which needed repair. These were:
$\isa{hb}\isacharunderscore\isa{antisym}$,
$\isa{hb}\isacharunderscore\isa{has}\isacharunderscore\isa{a}\isacharunderscore\isa{reason}$,
$\isa{hb}\isacharunderscore\isa{cross}\isacharunderscore\isa{node}\isacharunderscore\isa{delivery}$, and
$\isa{hb}\isacharunderscore\isa{broadcast}\isacharunderscore\isa{broadcast}\isacharunderscore\isa{order}$.
After removing the
$\isa{msg}\isacharunderscore\isa{id}\isacharunderscore\isa{unique}$ assumption,
each of the above four proofs were able to be repaired automatically by
Isabelle's proof search procedure $\isakeyword{sledgehammer}$~\citep{wenzel02}.

In each of the \CRDTs that we do verify, we are required to instantiate a lemma
stating:
\[
  \isa{apply}\isacharunderscore\isa{operations}~\isa{xs}~\isacharequal~
  \isa{apply}\isacharunderscore\isa{operations}~\isa{ys}
\]
where $\isa{xs}$ and $\isa{ys}$ are lists of messages delivered to a pair of
replicas by the network. In other words, no matter what messages are delivered
in what order, the two replicas attain the same state. Following the original
proofs provided for op-based \CRDTs in~\citet{gomes17}, our proofs of this lemma
make the standard assumption that:
\[
  \isakeyword{set}~(\isa{node}\isacharunderscore\isa{deliver}\isacharunderscore\isa{messages}~\isa{xs})~\isacharequal~
  \isakeyword{set}~(\isa{node}\isacharunderscore\isa{deliver}\isacharunderscore\isa{messages}~\isa{ys})
\]
Note that although we require that the set of operations delivered at two nodes
is identical in order for those two nodes to attain the same value, we are able
to reason over an expanded set of network behaviors. For example, if some
message $m$ appears in either of the two sets above, we know that it only
appears in that node's history once, by the
$\isa{msg}\isacharunderscore\isa{id}\isacharunderscore\isa{unique}$ assumption.
But without that assumption, we know instead that it appears \emph{at least}
once in each of the node's log of history.

This is a key distinction, since not knowing how many times a message was
delivered to either of the two replicas means that we are able to conclude that
they reach the same state if the same set of messages is delivered \emph{at
least once} to each of the replicas. Said otherwise, it does not matter how many
times a message was delivered at each of two replicas, so long as it was
delivered at least once at both. This allows us to exercise the latter case of
Example~\ref{example:state-op-dup-msgs} using the relaxed network model.

\section{State-based \CRDTs}
\label{sec:isabelle-state-crdts}

Equipped with a relaxed network model, we are now ready to verify two examples
of state-based \CRDTs.

\subsection{State-based G-Counter}
We begin first with the G-Counter, the formal definition of which can be found in
Section~\ref{sec:example-gcounter}. Following our intuition in
Maxim~\ref{maxim:state-as-op}, we define a type to represent the $\isa{state}$ and
$\isa{operation}$ of a state-based G-Counter, presented below:

\begin{figure}[H]
  \input{figures/theories/gcounter-state}
  \caption{Isabelle definitions for $\isa{state}$ and $\isa{operation}$ for a
    state-based G-Counter \CRDT.}
\end{figure}

Here, we let the state be a partial mapping from a transparent
$\isacharprime\isa{id}$ type (the value of which uniquely identifies a replica
in the system) to an $\isa{int}$ which specifies the number of increment
operations performed at that replica. Like in
Section~\ref{sec:example-gcounter}, this defines a vector-like object, where
each slot in the vector corresponds to the number of increment operations
performed at some unique replica in the system. We define this mapping to be
partial, where the $\isa{None}$ value signals that no increments have been
performed at a given node.\footnote{This choice is arbitrary, and could have
easily have been implemented as mapping to $0$ instead.}

Next we define the operation to be a type-level synonym for the
$\isacharprime\isa{id}~\isa{state}$ type. This encodes that operations
\emph{are} states. We interpret that upon receipt of an operation that we
replace our current state with the join of it and the state encoded by the
operation, which is an implementation of Maxim~\ref{maxim:state-as-op}.

Before introducing the interpretation of
$\isa{gcounter}\isacharunderscore\isa{op}$ (which will be responsible for
performing this join operation as described), we look at a few other functions
which are defined to act over this type:

\begin{figure}[H]
  \input{figures/theories/gcounter-misc}
  \caption{Isabelle definitions for state-based G-Counter-related functions.}
  \label{fig:gcounter-option-max}
\end{figure}

The function $\isa{inc}$ specifies (for demonstration purposes) how to increment
the value in a vector for some node. That is, $\isa{inc}$ specifies the
procedure to execute when an increment operation is performed at some replica.
Since our proofs reason purely about transitions of states, and not the external
forces that drive them, this function is never called by our proofs, but merely
left for the reader as a demonstration of how to drive the system.

The other function $\isa{option}\isacharunderscore\isa{max}$ specifies the
pair-wise maximum of two $\isa{int}~\isa{option}$ values. Note that these are
the right-hand side of the mapping in $\isacharprime\isa{id}~\isa{state}$, and
so this function is used to merge the state received from some other replica.
We will prove some additional facts about this function shortly, but for now we
interpret it as taking the maximum of two optional integers, where a present
integer is always preferred over an absent one,\footnote{That is, the maximum of
$\isa{Some}~x$ and $\isa{None}$ is $\isa{Some}~x$.} and the maximum of two
absent integers is $\isa{None}$.

Now that we have a way to interpret the pair-wise maximum of two states which
constitute a join, we can specify our definition of the ``operation'' for a
state-based G-Counter \CRDT. Recall that as in Maxim~\ref{maxim:state-as-op}, we
need to specify an operation which is the join of two states. We present now the
definition as used in our proofs:

\begin{figure}[H]
  \input{figures/theories/gcounter-state-op}
  \caption{Isabelle definition for the ``operation'' of a state-based G-Counter
    \CRDT.}
\end{figure}

Here, we specify a function that produces a partial mapping from an operation
and state to a new state. The function is not total (that is, it \emph{can}
return $\isa{None}$ for some input) to indicate a crash. For our purposes, we do
not specify such a case, and so the function always returns $\isa{Some}$ for any
input.\footnote{Note that this $\isa{None}$ is different from the partial
mapping of the $\isacharprime\isa{id}~\isa{state}$ type, which specifies that
the \emph{count} of increment operations at some replicas may zero. Returning
$\isa{None}$ from $\isa{gcounter}\isacharunderscore\isa{op}$ indicates that
there is no state at all, i.e., a crash has occurred.} Here, the state on the
left-hand side indicates the state that our replica currently has. The
``operation'' so-to-speak is the state at some \emph{other} replica. By encoding
the state from a remote replica in this fashion, we are implicitly saying that
this state should be joined with our current state, and that the result of this
join should replace our current state. So, we return a new state, which is a
function which maps node identifiers to the maximum of the associated value
between our previous state, and the state at some other replica.

For example, if our state in a four-replica system is:
\[
  \{ r_1 : 1,\, r_2 : \bot,\, r_3 : 2,\, r_4 : \bot \}
\]
and the state of some replica is:
\[
  \{ r_1 : \bot,\, r_2 : 1,\, r_3 : 3,\, r_4 : \bot \}
\]
the resulting state is:
\[
  \{ r_1 : 1,\, r_2 : 1,\, r_3 : 3,\, r_4 : \bot \}
\]

In Isabelle, we encode this as a function which forms a closure over the local
and remote states, and then computes the maximum some given node identifier
$\isa{x}$. In practice, this is the lazy equivalent to computing all of the
values up front upon merging.

Now that we have an instantiation of how to modify and merge states (the
equivalent of the $u$ and $m$), it remains to show that this is a suitable
instantiation of the
$\isa{strong}\isacharunderscore\isa{eventual}\isacharunderscore\isa{consistency}$
locale.\footnote{Recall that instating this locale is equivalent to a proof that
the object it is being instantiated with has \SEC.}

A first-try instantiation shows that it is not possible to do so without
additional proofs. Upon inspecting the unmet goals, we can deduce that Isabelle
wants a proof of the commutativity and associativity of
$\isa{option}\isacharunderscore\isa{max}$, the key function used to implement
the merge of two states. We leave the full definitions of these proofs to
Section~\ref{sec:app-gcounter-comm-assoc}; most are able to be completed with
induction and term simplification only, and so are not of great interest to this
section.

Once we have a proof of commutativity and associativity (Isabelle can infer that
$\isa{option}\isacharunderscore\isa{max}$ is idempotent automatically), we then
state an important lemma and corollary, which are as follows:

\begin{figure}[H]
  \input{figures/theories/gcounter-commute}
  \caption{Isabelle proofs that concurrent operations commute in the state-based
    G-Counter.}
\end{figure}

\begin{figure}[H]
  \input{figures/theories/gcounter-convergence}
  \caption{Isabelle proofs that the state-based G-Counter is convergent.}
\end{figure}

This and the above proof establish the following two lemmas:
\begin{itemize}
  \item Operations that have been delivered at some node can be applied in any
    order up to causality and still achieve the same state (there is a more
    general result, since \emph{all} operations on the G-Counter are concurrent,
    but we specialize to showing a more specific case to guide Isabelle's reuse
    of the proof).
  \item Having the same set of operations delivered at any two replicas ensures
    that those replicas are in the same state.
\end{itemize}

The first property is a helpful lemma which is used in internal proofs, but the
second lemma should be familiar to the reader: this is the safety property of
\SEC! Note also that this is the first time that we are seeing our efforts in
relaxing the network model bear fruit. That is, even though the two \emph{sets}
must be equal, we do not make a restriction on the number of times that a
particular message is delivered at either node. This allows us to prove a
stronger result that this \CRDT achieves a consistent result despite the number
of times that a message was (or was not) duplicated.

Finally, once we have shown these two properties, we can instantiate the
$\isa{strong}\isacharunderscore\isa{eventual}\isacharunderscore\isa{consistency}$
locale, which is witness to the fact that this \CRDT object achieves \SEC. We
present the instantiation now, and leave the proof to
Section~\ref{sec:app-gcounter-comm-assoc}:

\begin{figure}[H]
  \input{figures/theories/gcounter-state-sec}
  \caption{Isabelle proof that the state-based G-Counter \CRDT is \SEC.}
\end{figure}

Incidentally by this point, the proof that \SEC is inhabited by our encoding of
the G-Counter \SEC is mostly automatic, up to giving Isabelle some hints about
rewrite and simplification rules that it should apply.

\subsection{State-based G-Set}
\label{sec:state-gset}

Now that we have verified a state-based G-Counter \CRDT, we turn our attention
to the other \CRDT object for study in this thesis. This will be the state-based
G-Set, which is described in detail in Section~\ref{sec:example-gset}. Readers
may notice that the remaining sections in this chapter are shorter and shorter
as we build up and reuse techniques from earlier proofs in later ones.

For now, we begin with an instantiation of the state-based G-Set \CRDT, as
follows:

\begin{figure}[H]
  \input{figures/theories/gset-state}
  \caption{Isabelle types for the state and operations of a state-based G-Set.}
\end{figure}

Like in Figure~\ref{fig:gset-state}, we parameterize the \CRDT on the type of
element in the set, which we denote in Isabelle as $\isacharprime\isa{a}$.
Similar to our offers in the previous sub-section, we define the
$\isa{operation}$ type to be a type-level synonym for the $\isa{state}$ type,
which we interpret in the same way (that is, that receiving an ``operation''
from some other replica is equivalent to being told to merge our state with the
received one, and replace our current state with the result).

Next, we define a simple insertion operation:

\begin{figure}[H]
  \input{figures/theories/gset-state-op}
  \caption{Isabelle definition of the insertion operation for a state-based
    G-Set.}
\end{figure}

Again, we define an operation $\isa{insert}$ for demonstration
purposes.\footnote{Again, our proofs reason about state \emph{transitions}.} Now
that we have a convenience function for generating states that could be used to
drive state transitions within the system, we can instate the interpretation of
an operation at a state-based G-Set \CRDT. This is the second function in the
above Isabelle snippet. Like the state-based G-Counter \CRDT, we map a pair of
$\isacharprime\isa{a}~\isa{operation}$ and $\isacharprime\isa{a}~\isa{state}$ to
a new state of the same type, or $\isa{None}$.\footnote{The existing library
in~\citet{gomes17} requires that this function be a partial mapping, but we do
not specify any behaviors which would cause our node to crash in ordinary
execution here.}

Faithful to the original specification in Figure~\ref{fig:gset-state}, we
interpret the join of two states (that is, two sets of items, one per replica)
as the merge operation.

Because we are using Isabelle's $\isa{set}$ library and its built-in function
$\cup$, we can leverage proofs about built-in Isabelle types, including the fact
that $\cup$ is commutative, associative, and idempotent. Therefore, unlike our
experience in the previous sub-section when specifying the state-based G-Counter
\CRDT, we do not need to prove these facts ourselves.\footnote{Recall that in
this instance, we were using a user-defined function
$\isa{option}\isacharunderscore\isa{max}$, and had an expanded obligation to
prove that this function was commutative and associative; Isabelle inferred
idempotence automatically.}

Aside from some additional proofs which are standard to all of our
instantiations of \CRDTs using the library from~\citet{gomes17}, we can
immediately instantiate the
$\isa{strong}\isacharunderscore\isa{eventual}\isacharunderscore\isa{consistency}$
locale without additional proof. We present the statement of this locale below,
and leave it and the additional proofs about the state-based G-Set to
Section~\ref{sec:app-state-gset}.

\begin{figure}[H]
  \input{figures/theories/gset-state-sec}
  \caption{Isabelle instantiation of the
    $\isa{strong}\isacharunderscore\isa{eventual}\isacharunderscore\isa{consistency}$
    locale for the state-based G-Set.}
\end{figure}

\section{$\delta$-state based \CRDTs}
\label{sec:isabelle-delta-crdts}

We have reached the climax of this chapter in which we now set out to verify
that $\delta$-state based \CRDT equivalents of the G-Counter and G-Set are also
inhabitants of the \SEC locale.

This follows simply from construction and our reductions. Recall the sentiment
of our claim in Maxim~\ref{maxim:delta-as-op} that all $\delta$-state \CRDTs are
themselves like the op-based equivalent of state-based \CRDTs, only with
additional restriction on what states are sent to other replicas. All that
suffices to show is that restricted executions of the datatype--that is, ones in
which only $\delta$-state fragments are sent, and not full state--still inhabit
the \SEC locale.

Recall that, since our proofs reason about state transitions inductively, we
have implicitly covered the case in which only $\delta$-fragments of state are
exchanged between replicas. This is a consequence of our encoding of
$\delta$-state \CRDTs as op-based \CRDTs, and the fact that all $\delta$-based
\CRDT messages are also state-based \CRDT messages.

Since we have verified our \CRDTs as inhabiting the \SEC locale over all
possible operations, we produced proofs for $\delta$-state \CRDTs as a
side-effect of our strategy in Maxims~\ref{maxim:state-as-op}
and~\ref{maxim:delta-as-op}.

We devote the remainder of this section to stating the types of the
operation-producing functions for the $\delta$-based \CRDT equivalents of the
G-Counter and G-Set.

\subsection{$\delta$-state based G-Counter}
\label{sec:isabelle-delta-gcounter}

We begin first with our full definition of the $\delta$-state based G-Counter
\CRDT. Like the state-based variant, we treat the state as a partial mapping
between a transparent node identifier type and an optional value, referring to
the number of increment operations performed locally at that node. Following
Maxim~\ref{maxim:state-as-op}, we treat the operation again as a type-level
synonym for the state.

Similar to our treatment of the state-based G-Counter \CRDT, we encode the state
as a partial mapping from the set of node identifiers to an integer number of
times that an increment operation was performed at the replica belonging to that
node identifier. Likewise, we treat the operation as a type-level synonym for
this definition of the state.

The only difference (besides renaming $\isa{gcounter}\isacharunderscore\isa{op}$
to $\isa{delta}\isacharunderscore\isa{gcounter}\isacharunderscore\isa{op}$) is
that: the function $\isa{update}$ does not ever return a value from the
underlying state which does not belong to the replica being updated. That is, we
return a state which is \emph{only} defined for the single replica being
updated.

The full definition of the updated operation function in Isabelle is as follows:

\begin{figure}[H]
  \input{figures/theories/delta-gcounter}
  \caption{Isabelle definition of the $\delta$-state G-Counter \CRDT.}
\end{figure}

Here, we return a $\delta$-state which is only defined for the single replica
identifier being incremented. That is, we only return a value which is not
$\isa{None}$ for the occurrence when the parameter $\isa{j}$ is bound to a value
which equals $\isa{who}$. When this is met, we increment the value in the state
by one, and return the sum.

Since the body of the $\delta$-based G-Counter \CRDT is the same, and only the
convenience function changed, all other proofs are the same. In
Section~\ref{sec:alternate-delta-encoding}, we discuss an alternative encoding
which limits the kind of messages being sent at the type-level to be restricted
only to $\delta$-fragments.

\subsection{$\delta$-state based G-Set}

Finally, we turn our attention to the remaining \CRDT instance: the G-Set.
Similar to our experience verifying the $\delta$-based G-Counter, the
specification of the \CRDT itself is identical to the original encoding in
Section~\ref{sec:state-gset}, following our intuition in
Maxim~\ref{maxim:delta-as-op}.

For completeness, we present the full instantiation of this type (again leaving
the additional proofs to the Appendix in Section~\ref{sec:app-delta-gset}):

\begin{figure}[H]
  \input{figures/theories/delta-gset}
  \caption{Isabelle definition of the $\delta$-state G-Set \CRDT.}
\end{figure}

Notice that our encoding is identical as in the state-based G-Set example, but
the definition of $\isa{insert}$ has changed. Instead of constructing and
sending the union of the current set and the singleton set containing the item
we wish to add, we construct only the singleton set.

This guides our understanding that if this \CRDT only sends messages that are
able to be generated from the modified $\isa{insert}$ function, that it will
achieve \SEC, and indeed we are able to instantiate the
$\isa{strong}\isacharunderscore\isa{eventual}\isacharunderscore\isa{consistency}$
locale over this type. Because our proofs are inductive over state
\emph{transitions}, we have implicitly proved the case where only
$\delta$-fragments are sent as well.

In the following section, we discuss an alternate encoding which permits a more
direct proof of this fact.

\section{Alternative encoding of the $\delta$-state reduction}
\label{sec:alternate-delta-encoding}

In this Section, we discuss an alternative encoding in Isabelle of
$\delta$-state \CRDTs. Our key insight following
Maxim~\ref{maxim:delta-as-op} is that in a system where the proofs are done
inductively over state transitions, all executions which only exchange
$\delta$-fragments are implicitly verified. That is, since these messages
comprise a subset of the set of messages which are sent by state-based \CRDTs,
our inductive hypothesis still holds, and the result is preserved for
$\delta$-state \CRDTs.

But the key restriction in Maxim~\ref{maxim:delta-as-op} is that
$\delta$-state \CRDTs are ordinarily allowed to send only \emph{fragments} of
their state, not the entire state.\footnote{This restriction does not hold for
certain anti-entropy algorithms which are implemented on top of $\delta$-based
\CRDTs~\citep{almedia18}. This left to future work and discussed briefly in
Section~\ref{sec:direct-delta-proofs}.} For our purposes, we devote the
remainder of this section to exploring how this restriction is encoded at the
type level in our proofs in Isabelle.

The approach that we take here is to let the operation type be a type-level
synonym for a sort of refinement type of the state. Consider for a brief example
the G-Set \CRDT. Here, the full state is $\isacharprime\isa{a}~\isa{set}$, but
the $\delta$-fragments are singleton sets. Ordinarily we would make a type-level
alias from $\isacharprime\isa{a}~\isa{operation}$ to be the same as
$\isacharprime\isa{a}~\isa{state}$, but this is too permissive. Recall that the
operation--for our purposes--is analogous to the kind of the update message sent
between replicas. We want to encode that this can \emph{only} be the singleton
set, not any arbitrary set. To do this, we let the $\isa{operation}$ type be a
single element of $\isacharprime\isa{a}$ type, which we interpret as the
singleton set.

In the following sections, we will consider two examples of this restriction.
Note that we are proving the same thing, so the underlying proof statement is
unchanged. That is, in Section~\ref{sec:isabelle-delta-crdts} we were reasoning
about an inductive hypothesis over \emph{all} possible state transitions. In
this section, we are reasoning about smaller single transitions (e.g., in the
case of a G-Set, adding at most one element in each step), but this is still
sufficient to reason about all possible state transitions.

\subsection{Refined $\delta$-state based G-Counter}
We begin first with the $\delta$-state based G-Counter, and specify it using our
alternate encoding. Recall that in the original specification in
Section~\ref{sec:isabelle-delta-gcounter}, we let the state type be a (partial)
mapping from a transparent node identifier type $\isacharprime\isa{a}$ to
$\isa{nat}$.

Aliasing the operation type to be a type-level synonym for the state allowed our
$\delta$-state \CRDT instantiation to send \emph{any} message, which is too
permissive. Recall that in a $\delta$-state G-Counter, we typically send a
single update, e.g., $\{ r_1 \mapsto n \}$ for a single replica $r_1$
incrementing the number of operations performed up to $n$. We specify this
single-update in Isabelle as follows:

\begin{figure}[H]
  \input{figures/theories/delta-gcounter-refined-state}
  \caption{Isabelle definitions for the $\isa{state}$ and $\isa{operation}$
    types for the restricted $\delta$-based G-Counter.}
\end{figure}

Here, we encode the restriction that a $\delta$-state G-Counter can only send an
update about a single replica by encoding that its operation type is a pair of
a transparent node identifier value and the number of increments performed at
that node.

In the following figure, we present the remainder of the altered definitions to
work around this more restricted $\isa{operation}$ type.

\begin{figure}[H]
  \input{figures/theories/delta-gcounter-refined-ops}
  \caption{Isabelle definitions of the remainder of functions for the restricted
    $\delta$-state G-Counter.}
\end{figure}

In the above, we omit the definition of
$\isa{option}\isacharunderscore\isa{max}$, which is identical to
Figure~\ref{fig:gcounter-option-max}. First, we reimplement $\isa{inc}$ to
return a value of the correct type by constructing a pair of the node being
incremented, and the value that it is being incremented to. Now that this is
done, we update our implementation of
$\isa{delta}\isacharunderscore\isa{gcounter}\isacharunderscore\isa{op}$ to match
the new type. Again, we return a function which takes the pairwise maximum
between the old and new values corresponding to a given node. However, we can no
longer pass the given operation as input to this function, since it does not
have the same type as the state of our \CRDT in this encoding.

To address this, we \emph{convert} the operation into a state by constructing a
state which is only defined for the single node being updated, and returns
$\isa{None}$ for all other values. Once we have this, we can then call it with
an arbitrary node to take its pairwise maximum to generate an updated state.

After specifying the \CRDT using this alternate encoding, we did not have to
update any of our existing proofs developed in
Section~\ref{sec:isabelle-delta-gcounter}, since Isabelle was able to infer the
remainder of facts it needed to recheck our existing proofs.

\subsection{Refined $\delta$-state based G-Set}

In this section, we apply the same techniques to show that an alternate encoding
of the $\delta$-based G-Set still achieves \SEC. As illustrated
in~\ref{sec:alternate-delta-encoding} we replace the definition of the
$\isa{operation}$ type to only allow for restricted, single-element messages
as follows:

\begin{figure}[H]
  \input{figures/theories/delta-gset-refined-state}
  \caption{Isabelle definitions for the $\isa{state}$ and $\isa{operation}$
    types for the restricted $\delta$-based G-Set.}
\end{figure}

First observe that the underlying type for $\isacharprime\isa{a}~\isa{state}$ is
unchanged, but that the new type for $\isacharprime\isa{a}~\isa{operation}$ only
allows a single value to be communicated in messages between two nodes.

Faced with this additional restriction, we update our proofs accordingly.
Following the example in the previous section, we can imagine that our proofs
will need to be updated in two locations:
\begin{enumerate}
  \item The definition of $\isa{insert}$ will become simplified, since we will
    no longer have to refer to the current state when generating the message
    signaling an item has been inserted.
  \item The interpretation of the operation will become slightly more complex,
    since we will have to treat the incoming item \emph{encoded} in the operation
    as a singleton set, and will thus have to do that conversion.
\end{enumerate}

We include the updated definitions of these two functions in Isabelle below:

\begin{figure}[H]
  \input{figures/theories/delta-gset-refined-ops}
  \caption{Isabelle definitions of the remainder of functions for the restricted
    $\delta$-state G-Set.}
\end{figure}

Notice that our insertion operation became dramatically simpler. In fact, the
function is so simple, that it is the identity on its first
parameter. We could have dropped the second parameter from the function
entirely,\footnote{Making its type signature $\isacharprime\isa{a} \Rightarrow
(\isacharprime\isa{a}~\isa{operation})$.} but we leave it there to illustrate
the fact that it \emph{can} be ignored.

This simplification is balanced with a small amount of complexity added in the
$\isa{delta}\isacharunderscore\isa{gset}\isacharunderscore\isa{op}$ function,
which now constructs a singleton set from the incoming operation--referred
to as $\isa{a}$--into $\{~\isa{a}~\}$.

As before, Isabelle is able to infer the remaining set of facts given our
definitions above in order to check the unmodified proofs from the original
encoding.

\section{Conclusion}

In this Chapter, we motivated our rationale behind relaxing the network model
on top of which we verify our \CRDTs. We described our proof strategy for
relaxing the network model, and presented two example \CRDTs which we verified
on top of this network model. We began each example by showing the state-based
object, and a proof that each \CRDT inhabits the \SEC locale, even on the
relaxed network model.

We then reasoned that our state-based example \CRDTs in fact establish the same
goal for $\delta$-state \CRDTs without additional modification, according to our
result in Maxim~\ref{maxim:delta-as-op}. Finally, we presented an alternate
encoding which restricts the set of messages nodes are allowed to send together,
which more closely approximates the set of messages that $\delta$-\CRDTs are
allowed to send, and again proved that this encoding is an inhabitant of \SEC.

  \chapter{Future Work}
\label{chap:future-work}

This chapter outlines potential future research directions based on interesting
and under-explored areas in this work. Here, we will outline three directions in
the area of verifying $\delta$-state \CRDTs, as well as some insight that might
be gained by exploring each of these directions. It is our hope that future
researchers in this area may choose to conduct further investigation into these
areas.

\section{Verifying additional $\delta$-state \CRDTs}
\label{sec:future-pair-locale}
In our work, we presented examples of two $\delta$-state \CRDTs: the $\delta$
G-Counter, and the $\delta$ G-Set. An immediate future direction is to
investigate and verify more instances of $\delta$-state CRDTs.

One area of particular interest is in the \emph{composition} of multiple
$\delta$-state CRDTs. We have begun investigating the instantiation of a $pair$
locale, which takes as arguments two independent $\delta$-state \CRDTs, known as
``left'' and ``right.'' Our hope is that provided existing instantiations of
both of the sub-\CRDTs, that a $pair$ locale given two already-verified \CRDTs
could be used without additional proof burden to create another instance of the
$network{\isacharunderscore}with{\isacharunderscore}ops$ locale. That is: can
two already-verified $\delta$-state \CRDTs be used to compose a new
$\delta$-state CRDT which is their product without additional proof burden?

If this were possible, two new \CRDTs would be verified without effort: the
PN-Counter and the 2P-Set. These two \CRDTs are the most straightforward
composition of other known \CRDTs. Namely, the PN-Counter supports both an
\textsf{inc} and \textsf{dec} operation by maintaining \emph{two} counters (each
of which is treated as a single G-Counter, so the overall state is still
monotone and thus forms a join semi-lattice).

The PN-Counter has two $\delta$-state based G-Counter, which we refer to as
$(\fst s)$ and $(\snd s)$, where $s \in S$ refers to the state of the
PN-Counter. One possible specification for a $\delta$-based PN-Counter follows:

\begin{figure}[H]
  \[
    \textsf{PN-Counter}_\delta = \left\{\begin{aligned}
      S &: \mathbb{N}_0^{|\mathcal{I}|} \times \mathbb{N}_+^{|\mathcal{I}|} \\
      s^0 &: \left[ 0, 0, \cdots, 0 \right] \times \left[ 0, 0, \cdots, 0 \right] \\
      q^\delta &: \lambda s. \sum_{i \in \mathcal{I}} (\fst~s)(i) - \sum_{i \in
        \mathcal{I}} (\snd~s)(i) \\
      u^\delta &: \lambda s,(i,op). \begin{cases}
             \{ i \mapsto 1 \} \times \emptyset & \text{if $op=+$} \\
             \emptyset \times \{ i \mapsto 1 \} & \text{if $op=-$} \\
           \end{cases} \\
      m^\delta &: \lambda s_1, s_2.\, \begin{aligned}
             &\left\{ \max\{ i_1, i_2 \} : i \in \mathsf{dom}((\fst~s_1) \cup (\fst~s_1)) \right\} \\
             \times &\left\{ \max\{ i_1, i_2 \} : i \in \mathsf{dom}((\snd~s_1) \cup (\snd~s_1)) \right\}
           \end{aligned}
    \end{aligned}\right.
  \]
  \caption{$\delta$-state based \textsf{PN-Counter} \CRDT}
\end{figure}

Minor additional consideration is given to the updating function,
$u^\delta$, which returns an empty-set on the counter \emph{not} being updated.
Finally, the merging function $m^\delta$ merges the left- and right-hand sides
of the counter separately, and returns a pair. The 2P-Set is similar in function
to the above, substituting a $\delta$-state based G-Set in place of the
G-Counter.

If such a $\isa{pair}$ locale exists, we believe it would be as straightforward
as instantiating this locale over two copies of the G-Counter and G-Set to
obtain the PN-Counter and 2P-Set immediately.

\section{Direct $\delta$-state \CRDT proofs}
\label{sec:direct-delta-proofs}

To explore this idea, we drew significant inspiration from the work of Almedia
and his co-authors in~\citet{almedia18} to restate $\delta$-state \CRDTs in
terms of op-based \CRDTs in an effort to reuse as much of their library as
possible.

A significant drawback of this approach is that we are bound to the same
restrictions as op-based \CRDTs, which are inherently more restricted than
state-based \CRDTs. Much of this restriction comes from the \emph{eventual
delivery} property of \EC, which states that~\citep{shapiro11}:
\[
  \forall i, j.\, f \in c_i \Rightarrow \lozenge f \in c_j
\]
or that for any pair of correct replicas $i$, $j$ with histories $c_i$ and
$c_j$, respectively, an update received at one of those replicas is eventually
received at all other replicas.

Of course, under relaxed delivery semantics (i.e., in the case that the network
may delay messages for an infinite amount of time), op-based \CRDTs do not
achieve this property~\citep{shapiro11}. Namely, if an operation is performed at
some replica, and that message is dropped while in transit to another replica,
that replica will never receive the message.

State-based \CRDTs do not suffer from this problem, since \emph{every} update
they send encapsulates the history of all previous updates, since each update is
either reflected in the state, or subsumed by some later update which is itself
reflected in the state~\citep{shapiro11}. Since the entirety of the state is
shared with each replica during an update, state-based \CRDTs do not need to
impose an additional delivery relation in order to prove that they achieve \SEC.

op-based \CRDTs, on the other hand, \emph{do} need to specify an additional
delivery relation on top of their definition. That is, the delivery relation $P$
is a \emph{predicate} over network behaviors in which the eventual delivery
property can hold. In other words, for op-based \CRDTs:
\[
  P \Rightarrow \forall i, j.\, f \in c_i \Rightarrow \lozenge f \in c_j
\]
where it is a standard assumption that $P$ preserves (1) message order up to
concurrent messages and (2) at least once delivery~\citep{shapiro11,
almedia18}.\footnote{In practice, vector timestamps or globally unique
identifiers are associated with each message at the network layer, and messages
are reordered upon delivery to ensure that messages are delivered in the correct
order. Since all messages are eventually delivered under the precondition $P$,
this is a standard assumption.}

However, specifying $\delta$-based \CRDTs as a refinement of state-based \CRDTs
directly would not be sufficient for a constructive proof that $\delta$-state
based \CRDTs achieve \SEC. This is due to the fact that $\delta$-state \CRDTs
send state \emph{fragments}, which makes them the state-based analogue of
op-based \CRDTs. Without an additional delivery relation, $\delta$-state \CRDT
replicas which do not receive some update will never catch up without additional
updates.

Consider the figure below. In this example, we have three $\delta$-state \CRDT
replicas of a $\delta$-based GCounter, and an \textsf{inc} is performed at
$r_1$. Immediately, $r_1$ generates the state fragment $\{ r_1 \mapsto 1 \}$,
and sends it to the other replicas, $r_2$ and $r_3$. For the sake of example,
say that the network drops the update in route to $r_3$ such that it is never
received by $r_3$:

\begin{figure}[H]
  \centering
  \includegraphics[width=.6\textwidth]{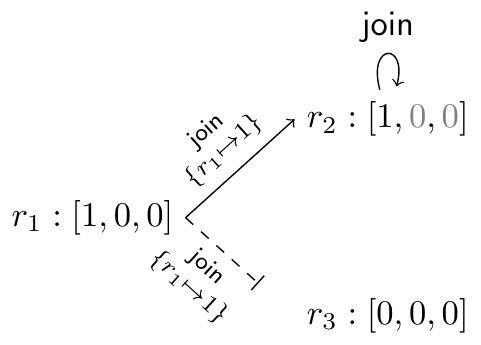}
  \caption{$\delta$-state \CRDTs violating \SEC without the causal merging
    condition.}
  \label{fig:delta-sec-violation}
\end{figure}

Without any future updates, neither of the replicas that \emph{have} received
the update will ever have reason to update $r_3$ again. This is a demonstration
of a \SEC violation, since:
\[
  \{ r_1 \mapsto 1 \} \in r_i \nRightarrow \lozenge \{ r_1 \mapsto 1 \} \in r_3,
  \quad i \in \{ 1, 2 \}
\]
That is, though the update $\{ r_1 \mapsto 1 \}$ is in the node histories of
$r_1$ and $r_2$ (both of which are behaving correctly), that update is
\emph{never} in the history of node $r_3$, which is also behaving correctly.

A critical issue in the above example is that $r_2$ merges the update from $r_1$
immediately--thus placing the update in that node's history--without knowing
whether or not it has been received by $r_3$. An anti-entropy algorithm like
in~\citet{almedia18} addresses these problems. The goal of an anti-entropy
algorithm for $\delta$-state \CRDTs is to do the following:
\begin{enumerate}
  \item On an operation, generate the $\delta$-mutation, and apply it to both
    the local state, and a temporary \emph{$\delta$-group}.
  \item Periodically, randomly choose between the current state and current
    $\delta$-group, and send its entire contents to all other replicas, and
    flush the $\delta$-group.
\end{enumerate}
This ensures that--even without outside interaction--the system as in
Figure~\ref{fig:delta-sec-violation} will eventually recover. This follows since
either one of $r_1$ or $r_2$ will at some point send their full state to all
other replicas, including $r_3$, at which point $r_3$ will have caught up.

We believe that it would be a worthwhile research goal to encode this
anti-entropy algorithm into a proof assistant, and specify that $\delta$-state
\CRDTs achieve \SEC \emph{without} a correspondence to traditional op-based
\CRDTs. Similarly to our work, in which we found a correspondence between
$\delta$- and op-based \CRDTs, we believe that specifying $\delta$-state \CRDTs
on their own would highlight the ways in which $\delta$-state \CRDTs are
\emph{different} from op-based \CRDTs.

Likewise, specifying the goal in this fashion would allow the proof to reason
about more network behaviors without a delivery predicate $P$, since the proof
would be aided by the periodic behavior of the anti-entropy algorithm above.

\section{Causally Consistent $\delta$-\CRDTs}

Another difference between op- and state-based \CRDTs is that state-based \CRDTs
require a \emph{causal merging condition} in order to ensure causal consistency
(that is, that updates are applied in a fashion that preserves their causality),
whereas in op-based \CRDTs this is traditionally an assumption placed on
$P$~\citep{shapiro11}.

The authors of~\citet{almedia18} define a \emph{$\delta$-interval}
$\Delta^{a,b}_i$ as:
\[
  \Delta^{a,b}_i = \bigsqcup \left\{ d_i^k : k \in [a, b) \right\}
\]
that is, $\Delta^{a,b}_i$ contains the deltas that occurred at replica $i$
beginning at time $a$ and up until time $b$. They go on to use this
$\delta$-interval to define the \emph{causal merging condition}, which is that
replica $i$ only joins a $\delta$-interval $\Delta^{a,b}_j$ into its own state
$X_i$ if:
\[
  X_i \sqsupseteq X_j^a
\]
That is, updates are only applied locally if they occurred \emph{before} the
latest-known update at replica $i$.

Algorithms which uphold the causal merging condition on $\delta$-intervals have
been proven on paper to satisfy Causal Consistency (\CC) in addition to \SEC. To
our knowledge, this result has not been mechanized, and so we believe it would
be a worthwhile direction of future research to specify the causal merging
condition and associated anti-entropy algorithms which preserve it into an
interactive theorem prover and mechanize the results of~\citet{almedia18}.

If the above is the subject of further exploration, we believe that it would be
additionally possible to prove that $\delta$-state \CRDTs achieve \SEC by a
\emph{simulation proof} that establishes their correspondence with state-based
\CRDTs. This is mentioned as Proposition 3 in~\citet{almedia18}, but we believe
that this is another fruitful area for formal verification.

  \chapter{Conclusion}
\label{chap:conclusion}

In this thesis, we extended the work in~\citet{gomes17} to provide a mechanized
proofs that $\delta$-\CRDTs~\citep{almedia18} achieve \SEC~\citep{shapiro11}.

Our central intuition (cf., Sections~\ref{sec:state-as-op}
and~\ref{sec:delta-as-op}) was to treat $\delta$- and state-based \CRDTs as
refinements of op-based \CRDTs.  This allowed us to successfully verify that two
\CRDTs--the G-Counter, and G-Set--achieve \SEC when specified both in the state-
and $\delta$-state based style.

In addition, we relaxed the network model by removing an assumption that all
messages are unique. While our main result is still predicated on a set of nice
delivery semantics $P$, this allowed us to quantify over an expanded set of all
possible network executions.

Together, this allowed us to restate the main result of~\citet{almedia18} in a
mechanized fashion. We believe that $\delta$-state \CRDTs satisfy an appealing
``best-of-both-worlds'' property. $\delta$-state \CRDTs require relatively
little of the network (like op-based \CRDTs), yet still maintain a relatively
small payload size (like state-based \CRDTs). This places great interest on
formal verification of their convergence properties.

In the future, we hope to see our result extended by specifying $\delta$-state
\CRDTs in terms of their state-based counterparts, as well as mechanizing
well-known anti-entropy algorithms and causality constraints on applying updates
from other replicas~\citep{almedia18}. We believe that this would be sufficient
to remove the precondition on a set of delivery semantics $P$ from our result.

  \appendix
  \chapter{Additional Proofs}

In this appendix, we provide the full proof scripts used in this work. The
source is available for free at: \url{https://github.com/ttaylorr/thesis}.

\section{state-based G-Counter \CRDT}
\label{sec:app-gcounter-comm-assoc}
\begin{isabellebody}
\isacommand{locale}\isamarkupfalse%
\ gcounter\ {\isacharequal}\ network{\isacharunderscore}with{\isacharunderscore}ops\ {\isacharunderscore}\ gcounter{\isacharunderscore}op\ {\isachardoublequoteopen}{\isasymlambda}\ x{\isachardot}\ None{\isachardoublequoteclose}\isanewline
\isanewline
\isacommand{lemma}\isamarkupfalse%
\ {\isacharparenleft}\isakeyword{in}\ gcounter{\isacharparenright}\ option{\isacharunderscore}max{\isacharunderscore}assoc{\isacharcolon}\isanewline
\ \ {\isachardoublequoteopen}option{\isacharunderscore}max\ a\ {\isacharparenleft}option{\isacharunderscore}max\ b\ c{\isacharparenright}\ {\isacharequal}\ option{\isacharunderscore}max\ {\isacharparenleft}option{\isacharunderscore}max\ a\ b{\isacharparenright}\ c{\isachardoublequoteclose}\isanewline
\isadelimproof
\ \ %
\endisadelimproof
\isatagproof
\isacommand{apply}\isamarkupfalse%
\ {\isacharparenleft}induction\ a{\isacharsemicolon}\ induction\ b{\isacharsemicolon}\ induction\ c{\isacharparenright}\isanewline
\ \ \isacommand{apply}\isamarkupfalse%
\ {\isacharparenleft}auto{\isacharparenright}\isanewline
\ \ \isacommand{done}\isamarkupfalse%
\endisatagproof
{\isafoldproof}%
\isadelimproof
\isanewline
\endisadelimproof
\isanewline
\isacommand{lemma}\isamarkupfalse%
\ {\isacharparenleft}\isakeyword{in}\ gcounter{\isacharparenright}\ option{\isacharunderscore}max{\isacharunderscore}commut{\isacharcolon}\ {\isachardoublequoteopen}option{\isacharunderscore}max\ a\ b\ {\isacharequal}\ option{\isacharunderscore}max\ b\ a{\isachardoublequoteclose}\isanewline
\isadelimproof
\ \ %
\endisadelimproof
\isatagproof
\isacommand{apply}\isamarkupfalse%
\ {\isacharparenleft}induction\ a{\isacharsemicolon}\ induction\ b{\isacharparenright}\isanewline
\ \ \isacommand{apply}\isamarkupfalse%
\ {\isacharparenleft}auto{\isacharparenright}\isanewline
\ \ \isacommand{done}\isamarkupfalse%
\endisatagproof
{\isafoldproof}%
\isadelimproof
\isanewline
\endisadelimproof
\isanewline
\isacommand{lemma}\isamarkupfalse%
\ {\isacharparenleft}\isakeyword{in}\ gcounter{\isacharparenright}\ {\isacharbrackleft}simp{\isacharbrackright}\ {\isacharcolon}\ {\isachardoublequoteopen}gcounter{\isacharunderscore}op\ x\ {\isasymrhd}\ gcounter{\isacharunderscore}op\ y\ {\isacharequal}\ gcounter{\isacharunderscore}op\ y\ {\isasymrhd}\ gcounter{\isacharunderscore}op\ x{\isachardoublequoteclose}\isanewline
\isadelimproof
\ \ %
\endisadelimproof
\isatagproof
\isacommand{apply}\isamarkupfalse%
\ {\isacharparenleft}auto\ simp\ add{\isacharcolon}\ kleisli{\isacharunderscore}def\ option{\isacharunderscore}max{\isacharunderscore}assoc{\isacharparenright}\isanewline
\ \ \isacommand{apply}\isamarkupfalse%
\ {\isacharparenleft}simp\ add{\isacharcolon}\ option{\isacharunderscore}max{\isacharunderscore}commut{\isacharparenright}\isanewline
\ \ \isacommand{done}\isamarkupfalse%
\endisatagproof
{\isafoldproof}%
\isadelimproof
\isanewline
\endisadelimproof
\isanewline
\isacommand{lemma}\isamarkupfalse%
\ {\isacharparenleft}\isakeyword{in}\ gcounter{\isacharparenright}\ concurrent{\isacharunderscore}operations{\isacharunderscore}commute{\isacharcolon}\isanewline
\ \ \isakeyword{assumes}\ {\isachardoublequoteopen}xs\ prefix\ of\ i{\isachardoublequoteclose}\isanewline
\ \ \isakeyword{shows}\ {\isachardoublequoteopen}hb{\isachardot}concurrent{\isacharunderscore}ops{\isacharunderscore}commute\ {\isacharparenleft}node{\isacharunderscore}deliver{\isacharunderscore}messages\ xs{\isacharparenright}{\isachardoublequoteclose}\isanewline
\isadelimproof
\ \ %
\endisadelimproof
\isatagproof
\isacommand{using}\isamarkupfalse%
\ assms\isanewline
\ \ \isacommand{apply}\isamarkupfalse%
{\isacharparenleft}clarsimp\ simp{\isacharcolon}\ hb{\isachardot}concurrent{\isacharunderscore}ops{\isacharunderscore}commute{\isacharunderscore}def{\isacharparenright}\isanewline
\ \ \isacommand{apply}\isamarkupfalse%
{\isacharparenleft}unfold\ interp{\isacharunderscore}msg{\isacharunderscore}def{\isacharcomma}\ simp{\isacharparenright}\isanewline
\ \ \isacommand{done}\isamarkupfalse%
\endisatagproof
{\isafoldproof}%
\isadelimproof
\isanewline
\endisadelimproof
\isanewline
\isacommand{corollary}\isamarkupfalse%
\ {\isacharparenleft}\isakeyword{in}\ gcounter{\isacharparenright}\ counter{\isacharunderscore}convergence{\isacharcolon}\isanewline
\ \ \isakeyword{assumes}\ {\isachardoublequoteopen}set\ {\isacharparenleft}node{\isacharunderscore}deliver{\isacharunderscore}messages\ xs{\isacharparenright}\ {\isacharequal}\ set\ {\isacharparenleft}node{\isacharunderscore}deliver{\isacharunderscore}messages\ ys{\isacharparenright}{\isachardoublequoteclose}\isanewline
\ \ \ \ \ \ \isakeyword{and}\ {\isachardoublequoteopen}xs\ prefix\ of\ i{\isachardoublequoteclose}\isanewline
\ \ \ \ \ \ \isakeyword{and}\ {\isachardoublequoteopen}ys\ prefix\ of\ j{\isachardoublequoteclose}\isanewline
\ \ \ \ \isakeyword{shows}\ {\isachardoublequoteopen}apply{\isacharunderscore}operations\ xs\ {\isacharequal}\ apply{\isacharunderscore}operations\ ys{\isachardoublequoteclose}\isanewline
\isadelimproof
\endisadelimproof
\isatagproof
\isacommand{using}\isamarkupfalse%
\ assms\ \isacommand{by}\isamarkupfalse%
{\isacharparenleft}auto\ simp\ add{\isacharcolon}\ apply{\isacharunderscore}operations{\isacharunderscore}def\isanewline
\ \ \ \ \ \ \ \ \ \ \ \ \ \ \ \ \ \ \ \ \ \ \ intro{\isacharcolon}\ hb{\isachardot}convergence{\isacharunderscore}ext\ concurrent{\isacharunderscore}operations{\isacharunderscore}commute\isanewline
\ \ \ \ \ \ \ \ \ \ \ \ \ \ \ \ \ \ \ \ \ \ \ \ \ \ \ \ \ \ node{\isacharunderscore}deliver{\isacharunderscore}messages{\isacharunderscore}distinct\ hb{\isacharunderscore}consistent{\isacharunderscore}prefix{\isacharparenright}%
\endisatagproof
{\isafoldproof}%
\isadelimproof
\isanewline
\endisadelimproof
\isanewline
\isacommand{context}\isamarkupfalse%
\ gcounter\ \isakeyword{begin}\isanewline
\isanewline
\isacommand{sublocale}\isamarkupfalse%
\ sec{\isacharcolon}\ strong{\isacharunderscore}eventual{\isacharunderscore}consistency\ weak{\isacharunderscore}hb\ hb\ interp{\isacharunderscore}msg\isanewline
\ \ {\isachardoublequoteopen}{\isasymlambda}ops{\isachardot}\ {\isasymexists}xs\ i{\isachardot}\ xs\ prefix\ of\ i\ {\isasymand}\ node{\isacharunderscore}deliver{\isacharunderscore}messages\ xs\ {\isacharequal}\ ops{\isachardoublequoteclose}\ {\isachardoublequoteopen}{\isasymlambda}\ x{\isachardot}\ None{\isachardoublequoteclose}\isanewline
\isadelimproof
\ \ %
\endisadelimproof
\isatagproof
\isacommand{apply}\isamarkupfalse%
{\isacharparenleft}standard{\isacharsemicolon}\ clarsimp{\isacharparenright}\isanewline
\ \ \ \ \ \ \isacommand{apply}\isamarkupfalse%
{\isacharparenleft}auto\ simp\ add{\isacharcolon}\ hb{\isacharunderscore}consistent{\isacharunderscore}prefix\ drop{\isacharunderscore}last{\isacharunderscore}message\ node{\isacharunderscore}deliver{\isacharunderscore}messages{\isacharunderscore}distinct\ concurrent{\isacharunderscore}operations{\isacharunderscore}commute{\isacharparenright}\isanewline
\ \ \ \isacommand{apply}\isamarkupfalse%
{\isacharparenleft}metis\ {\isacharparenleft}full{\isacharunderscore}types{\isacharparenright}\ interp{\isacharunderscore}msg{\isacharunderscore}def\ gcounter{\isacharunderscore}op{\isachardot}elims{\isacharparenright}\isanewline
\ \ \isacommand{using}\isamarkupfalse%
\ drop{\isacharunderscore}last{\isacharunderscore}message\ \isacommand{apply}\isamarkupfalse%
\ blast\isanewline
\ \ \isacommand{done}\isamarkupfalse%
\endisatagproof
{\isafoldproof}%
\isadelimproof
\isanewline
\endisadelimproof
\isacommand{end}\isamarkupfalse%
\isanewline
\isadelimtheory
\isanewline
\endisadelimtheory
\isatagtheory
\isacommand{end}\isamarkupfalse%
\endisatagtheory
{\isafoldtheory}%
\isadelimtheory
\endisadelimtheory
\end{isabellebody}

\section{state-based G-Set \CRDT}
\label{sec:app-state-gset}
\begin{isabellebody}
\isacommand{locale}\isamarkupfalse%
\ gset\ {\isacharequal}\ network{\isacharunderscore}with{\isacharunderscore}ops\ {\isacharunderscore}\ gset{\isacharunderscore}op\ {\isachardoublequoteopen}{\isacharbraceleft}{\isacharbraceright}{\isachardoublequoteclose}\isanewline
\isanewline
\isacommand{lemma}\isamarkupfalse%
\ {\isacharparenleft}\isakeyword{in}\ gset{\isacharparenright}\ {\isacharbrackleft}simp{\isacharbrackright}\ {\isacharcolon}\ {\isachardoublequoteopen}gset{\isacharunderscore}op\ x\ {\isasymrhd}\ gset{\isacharunderscore}op\ y\ {\isacharequal}\ gset{\isacharunderscore}op\ y\ {\isasymrhd}\ gset{\isacharunderscore}op\ x{\isachardoublequoteclose}\isanewline
\isadelimproof
\ \ %
\endisadelimproof
\isatagproof
\isacommand{apply}\isamarkupfalse%
\ {\isacharparenleft}auto\ simp\ add{\isacharcolon}\ kleisli{\isacharunderscore}def{\isacharparenright}\isanewline
\isacommand{done}\isamarkupfalse%
\endisatagproof
{\isafoldproof}%
\isadelimproof
\isanewline
\endisadelimproof
\isanewline
\isacommand{lemma}\isamarkupfalse%
\ {\isacharparenleft}\isakeyword{in}\ gset{\isacharparenright}\ concurrent{\isacharunderscore}operations{\isacharunderscore}commute{\isacharcolon}\isanewline
\ \ \isakeyword{assumes}\ {\isachardoublequoteopen}xs\ prefix\ of\ i{\isachardoublequoteclose}\isanewline
\ \ \isakeyword{shows}\ {\isachardoublequoteopen}hb{\isachardot}concurrent{\isacharunderscore}ops{\isacharunderscore}commute\ {\isacharparenleft}node{\isacharunderscore}deliver{\isacharunderscore}messages\ xs{\isacharparenright}{\isachardoublequoteclose}\isanewline
\isadelimproof
\ \ %
\endisadelimproof
\isatagproof
\isacommand{using}\isamarkupfalse%
\ assms\isanewline
\ \ \isacommand{apply}\isamarkupfalse%
{\isacharparenleft}clarsimp\ simp{\isacharcolon}\ hb{\isachardot}concurrent{\isacharunderscore}ops{\isacharunderscore}commute{\isacharunderscore}def{\isacharparenright}\isanewline
\ \ \isacommand{apply}\isamarkupfalse%
{\isacharparenleft}unfold\ interp{\isacharunderscore}msg{\isacharunderscore}def{\isacharcomma}\ simp{\isacharparenright}\isanewline
\isacommand{done}\isamarkupfalse%
\endisatagproof
{\isafoldproof}%
\isadelimproof
\isanewline
\endisadelimproof
\isanewline
\isacommand{corollary}\isamarkupfalse%
\ {\isacharparenleft}\isakeyword{in}\ gset{\isacharparenright}\ set{\isacharunderscore}convergence{\isacharcolon}\isanewline
\ \ \isakeyword{assumes}\ {\isachardoublequoteopen}set\ {\isacharparenleft}node{\isacharunderscore}deliver{\isacharunderscore}messages\ xs{\isacharparenright}\ {\isacharequal}\ set\ {\isacharparenleft}node{\isacharunderscore}deliver{\isacharunderscore}messages\ ys{\isacharparenright}{\isachardoublequoteclose}\isanewline
\ \ \ \ \ \ \isakeyword{and}\ {\isachardoublequoteopen}xs\ prefix\ of\ i{\isachardoublequoteclose}\isanewline
\ \ \ \ \ \ \isakeyword{and}\ {\isachardoublequoteopen}ys\ prefix\ of\ j{\isachardoublequoteclose}\isanewline
\ \ \ \ \isakeyword{shows}\ {\isachardoublequoteopen}apply{\isacharunderscore}operations\ xs\ {\isacharequal}\ apply{\isacharunderscore}operations\ ys{\isachardoublequoteclose}\isanewline
\isadelimproof
\endisadelimproof
\isatagproof
\isacommand{using}\isamarkupfalse%
\ assms\ \isacommand{by}\isamarkupfalse%
{\isacharparenleft}auto\ simp\ add{\isacharcolon}\ apply{\isacharunderscore}operations{\isacharunderscore}def\ intro{\isacharcolon}\ hb{\isachardot}convergence{\isacharunderscore}ext\ concurrent{\isacharunderscore}operations{\isacharunderscore}commute\isanewline
\ \ \ \ \ \ \ \ \ \ \ \ \ \ \ \ node{\isacharunderscore}deliver{\isacharunderscore}messages{\isacharunderscore}distinct\ hb{\isacharunderscore}consistent{\isacharunderscore}prefix{\isacharparenright}%
\endisatagproof
{\isafoldproof}%
\isadelimproof
\isanewline
\endisadelimproof
\isanewline
\isacommand{context}\isamarkupfalse%
\ gset\ \isakeyword{begin}\isanewline
\isanewline
\isacommand{sublocale}\isamarkupfalse%
\ sec{\isacharcolon}\ strong{\isacharunderscore}eventual{\isacharunderscore}consistency\ weak{\isacharunderscore}hb\ hb\ interp{\isacharunderscore}msg\isanewline
\ \ {\isachardoublequoteopen}{\isasymlambda}ops{\isachardot}\ {\isasymexists}xs\ i{\isachardot}\ xs\ prefix\ of\ i\ {\isasymand}\ node{\isacharunderscore}deliver{\isacharunderscore}messages\ xs\ {\isacharequal}\ ops{\isachardoublequoteclose}\ {\isachardoublequoteopen}{\isacharbraceleft}{\isacharbraceright}{\isachardoublequoteclose}\isanewline
\isadelimproof
\ \ %
\endisadelimproof
\isatagproof
\isacommand{apply}\isamarkupfalse%
{\isacharparenleft}standard{\isacharsemicolon}\ clarsimp{\isacharparenright}\isanewline
\ \ \ \ \ \ \isacommand{apply}\isamarkupfalse%
{\isacharparenleft}auto\ simp\ add{\isacharcolon}\ hb{\isacharunderscore}consistent{\isacharunderscore}prefix\ drop{\isacharunderscore}last{\isacharunderscore}message\ node{\isacharunderscore}deliver{\isacharunderscore}messages{\isacharunderscore}distinct\ concurrent{\isacharunderscore}operations{\isacharunderscore}commute{\isacharparenright}\isanewline
\ \ \ \isacommand{apply}\isamarkupfalse%
{\isacharparenleft}metis\ {\isacharparenleft}full{\isacharunderscore}types{\isacharparenright}\ interp{\isacharunderscore}msg{\isacharunderscore}def\ gset{\isacharunderscore}op{\isachardot}elims{\isacharparenright}\isanewline
\ \ \isacommand{using}\isamarkupfalse%
\ drop{\isacharunderscore}last{\isacharunderscore}message\ \isacommand{apply}\isamarkupfalse%
\ blast\isanewline
\ \ \isacommand{done}\isamarkupfalse%
\endisatagproof
{\isafoldproof}%
\isadelimproof
\isanewline
\endisadelimproof
\isacommand{end}\isamarkupfalse%
\isanewline
\isadelimtheory
\isanewline
\endisadelimtheory
\isatagtheory
\isacommand{end}\isamarkupfalse%
\endisatagtheory
{\isafoldtheory}%
\isadelimtheory
\endisadelimtheory
\end{isabellebody}%

\section{$\delta$-state G-Counter \CRDT}
\begin{isabellebody}
\isacommand{locale}\isamarkupfalse%
\ delta{\isacharunderscore}gcounter\ {\isacharequal}\ network{\isacharunderscore}with{\isacharunderscore}ops\ {\isacharunderscore}\ delta{\isacharunderscore}gcounter{\isacharunderscore}op\ {\isachardoublequoteopen}{\isasymlambda}\ x{\isachardot}\ None{\isachardoublequoteclose}\isanewline
\isanewline
\isacommand{lemma}\isamarkupfalse%
\ {\isacharparenleft}\isakeyword{in}\ delta{\isacharunderscore}gcounter{\isacharparenright}\ option{\isacharunderscore}max{\isacharunderscore}assoc{\isacharcolon}\isanewline
\ \ {\isachardoublequoteopen}option{\isacharunderscore}max\ a\ {\isacharparenleft}option{\isacharunderscore}max\ b\ c{\isacharparenright}\ {\isacharequal}\ option{\isacharunderscore}max\ {\isacharparenleft}option{\isacharunderscore}max\ a\ b{\isacharparenright}\ c{\isachardoublequoteclose}\isanewline
\isadelimproof
\ \ %
\endisadelimproof
\isatagproof
\isacommand{apply}\isamarkupfalse%
\ {\isacharparenleft}induction\ a{\isacharsemicolon}\ induction\ b{\isacharsemicolon}\ induction\ c{\isacharparenright}\isanewline
\ \ \isacommand{apply}\isamarkupfalse%
\ {\isacharparenleft}auto{\isacharparenright}\isanewline
\ \ \isacommand{done}\isamarkupfalse%
\endisatagproof
{\isafoldproof}%
\isadelimproof
\isanewline
\endisadelimproof
\isanewline
\isacommand{lemma}\isamarkupfalse%
\ {\isacharparenleft}\isakeyword{in}\ delta{\isacharunderscore}gcounter{\isacharparenright}\ option{\isacharunderscore}max{\isacharunderscore}commut{\isacharcolon}\ {\isachardoublequoteopen}option{\isacharunderscore}max\ a\ b\ {\isacharequal}\ option{\isacharunderscore}max\ b\ a{\isachardoublequoteclose}\isanewline
\isadelimproof
\ \ %
\endisadelimproof
\isatagproof
\isacommand{apply}\isamarkupfalse%
\ {\isacharparenleft}induction\ a{\isacharsemicolon}\ induction\ b{\isacharparenright}\isanewline
\ \ \isacommand{apply}\isamarkupfalse%
\ {\isacharparenleft}auto{\isacharparenright}\isanewline
\ \ \isacommand{done}\isamarkupfalse%
\endisatagproof
{\isafoldproof}%
\isadelimproof
\isanewline
\endisadelimproof
\isanewline
\isacommand{lemma}\isamarkupfalse%
\ {\isacharparenleft}\isakeyword{in}\ delta{\isacharunderscore}gcounter{\isacharparenright}\ {\isacharbrackleft}simp{\isacharbrackright}\ {\isacharcolon}\ {\isachardoublequoteopen}delta{\isacharunderscore}gcounter{\isacharunderscore}op\ x\ {\isasymrhd}\ delta{\isacharunderscore}gcounter{\isacharunderscore}op\ y\ {\isacharequal}\ delta{\isacharunderscore}gcounter{\isacharunderscore}op\ y\ {\isasymrhd}\ delta{\isacharunderscore}gcounter{\isacharunderscore}op\ x{\isachardoublequoteclose}\isanewline
\isadelimproof
\ \ %
\endisadelimproof
\isatagproof
\isacommand{apply}\isamarkupfalse%
\ {\isacharparenleft}auto\ simp\ add{\isacharcolon}\ kleisli{\isacharunderscore}def\ option{\isacharunderscore}max{\isacharunderscore}assoc{\isacharparenright}\isanewline
\ \ \isacommand{apply}\isamarkupfalse%
\ {\isacharparenleft}simp\ add{\isacharcolon}\ option{\isacharunderscore}max{\isacharunderscore}commut{\isacharparenright}\isanewline
\ \ \isacommand{done}\isamarkupfalse%
\endisatagproof
{\isafoldproof}%
\isadelimproof
\isanewline
\endisadelimproof
\isanewline
\isacommand{lemma}\isamarkupfalse%
\ {\isacharparenleft}\isakeyword{in}\ delta{\isacharunderscore}gcounter{\isacharparenright}\ concurrent{\isacharunderscore}operations{\isacharunderscore}commute{\isacharcolon}\isanewline
\ \ \isakeyword{assumes}\ {\isachardoublequoteopen}xs\ prefix\ of\ i{\isachardoublequoteclose}\isanewline
\ \ \isakeyword{shows}\ {\isachardoublequoteopen}hb{\isachardot}concurrent{\isacharunderscore}ops{\isacharunderscore}commute\ {\isacharparenleft}node{\isacharunderscore}deliver{\isacharunderscore}messages\ xs{\isacharparenright}{\isachardoublequoteclose}\isanewline
\isadelimproof
\ \ %
\endisadelimproof
\isatagproof
\isacommand{using}\isamarkupfalse%
\ assms\isanewline
\ \ \isacommand{apply}\isamarkupfalse%
{\isacharparenleft}clarsimp\ simp{\isacharcolon}\ hb{\isachardot}concurrent{\isacharunderscore}ops{\isacharunderscore}commute{\isacharunderscore}def{\isacharparenright}\isanewline
\ \ \isacommand{apply}\isamarkupfalse%
{\isacharparenleft}unfold\ interp{\isacharunderscore}msg{\isacharunderscore}def{\isacharcomma}\ simp{\isacharparenright}\isanewline
\ \ \isacommand{done}\isamarkupfalse%
\endisatagproof
{\isafoldproof}%
\isadelimproof
\isanewline
\endisadelimproof
\isanewline
\isacommand{corollary}\isamarkupfalse%
\ {\isacharparenleft}\isakeyword{in}\ delta{\isacharunderscore}gcounter{\isacharparenright}\ counter{\isacharunderscore}convergence{\isacharcolon}\isanewline
\ \ \isakeyword{assumes}\ {\isachardoublequoteopen}set\ {\isacharparenleft}node{\isacharunderscore}deliver{\isacharunderscore}messages\ xs{\isacharparenright}\ {\isacharequal}\ set\ {\isacharparenleft}node{\isacharunderscore}deliver{\isacharunderscore}messages\ ys{\isacharparenright}{\isachardoublequoteclose}\isanewline
\ \ \ \ \ \ \isakeyword{and}\ {\isachardoublequoteopen}xs\ prefix\ of\ i{\isachardoublequoteclose}\isanewline
\ \ \ \ \ \ \isakeyword{and}\ {\isachardoublequoteopen}ys\ prefix\ of\ j{\isachardoublequoteclose}\isanewline
\ \ \ \ \isakeyword{shows}\ {\isachardoublequoteopen}apply{\isacharunderscore}operations\ xs\ {\isacharequal}\ apply{\isacharunderscore}operations\ ys{\isachardoublequoteclose}\isanewline
\isadelimproof
\endisadelimproof
\isatagproof
\isacommand{using}\isamarkupfalse%
\ assms\ \isacommand{by}\isamarkupfalse%
{\isacharparenleft}auto\ simp\ add{\isacharcolon}\ apply{\isacharunderscore}operations{\isacharunderscore}def\ intro{\isacharcolon}\ hb{\isachardot}convergence{\isacharunderscore}ext\ concurrent{\isacharunderscore}operations{\isacharunderscore}commute\isanewline
\ \ \ \ \ \ \ \ \ \ \ \ \ \ \ \ node{\isacharunderscore}deliver{\isacharunderscore}messages{\isacharunderscore}distinct\ hb{\isacharunderscore}consistent{\isacharunderscore}prefix{\isacharparenright}%
\endisatagproof
{\isafoldproof}%
\isadelimproof
\isanewline
\endisadelimproof
\isanewline
\isacommand{context}\isamarkupfalse%
\ delta{\isacharunderscore}gcounter\ \isakeyword{begin}\isanewline
\isanewline
\isacommand{sublocale}\isamarkupfalse%
\ sec{\isacharcolon}\ strong{\isacharunderscore}eventual{\isacharunderscore}consistency\ weak{\isacharunderscore}hb\ hb\ interp{\isacharunderscore}msg\isanewline
\ \ {\isachardoublequoteopen}{\isasymlambda}ops{\isachardot}\ {\isasymexists}xs\ i{\isachardot}\ xs\ prefix\ of\ i\ {\isasymand}\ node{\isacharunderscore}deliver{\isacharunderscore}messages\ xs\ {\isacharequal}\ ops{\isachardoublequoteclose}\ {\isachardoublequoteopen}{\isasymlambda}\ x{\isachardot}\ None{\isachardoublequoteclose}\isanewline
\isadelimproof
\ \ %
\endisadelimproof
\isatagproof
\isacommand{apply}\isamarkupfalse%
{\isacharparenleft}standard{\isacharsemicolon}\ clarsimp{\isacharparenright}\isanewline
\ \ \ \ \ \ \isacommand{apply}\isamarkupfalse%
{\isacharparenleft}auto\ simp\ add{\isacharcolon}\ hb{\isacharunderscore}consistent{\isacharunderscore}prefix\ drop{\isacharunderscore}last{\isacharunderscore}message\ node{\isacharunderscore}deliver{\isacharunderscore}messages{\isacharunderscore}distinct\ concurrent{\isacharunderscore}operations{\isacharunderscore}commute{\isacharparenright}\isanewline
\ \ \ \isacommand{apply}\isamarkupfalse%
{\isacharparenleft}metis\ {\isacharparenleft}full{\isacharunderscore}types{\isacharparenright}\ interp{\isacharunderscore}msg{\isacharunderscore}def\ delta{\isacharunderscore}gcounter{\isacharunderscore}op{\isachardot}elims{\isacharparenright}\isanewline
\ \ \isacommand{using}\isamarkupfalse%
\ drop{\isacharunderscore}last{\isacharunderscore}message\ \isacommand{apply}\isamarkupfalse%
\ blast\isanewline
\ \ \isacommand{done}\isamarkupfalse%
\endisatagproof
{\isafoldproof}%
\isadelimproof
\isanewline
\endisadelimproof
\isacommand{end}\isamarkupfalse%
\isanewline
\isadelimtheory
\isanewline
\endisadelimtheory
\isatagtheory
\isacommand{end}\isamarkupfalse%
\endisatagtheory
{\isafoldtheory}%
\isadelimtheory
\endisadelimtheory
\end{isabellebody}%

\section{$\delta$-state G-Set \CRDT}
\label{sec:app-delta-gset}
\begin{isabellebody}%
\isacommand{locale}\isamarkupfalse%
\ delta{\isacharunderscore}gset\ {\isacharequal}\ network{\isacharunderscore}with{\isacharunderscore}ops\ {\isacharunderscore}\ delta{\isacharunderscore}gset{\isacharunderscore}op\ {\isachardoublequoteopen}{\isacharbraceleft}{\isacharbraceright}{\isachardoublequoteclose}\isanewline
\isanewline
\isacommand{lemma}\isamarkupfalse%
\ {\isacharparenleft}\isakeyword{in}\ delta{\isacharunderscore}gset{\isacharparenright}\ {\isacharbrackleft}simp{\isacharbrackright}\ {\isacharcolon}\ {\isachardoublequoteopen}delta{\isacharunderscore}gset{\isacharunderscore}op\ x\ {\isasymrhd}\ delta{\isacharunderscore}gset{\isacharunderscore}op\ y\ {\isacharequal}\ delta{\isacharunderscore}gset{\isacharunderscore}op\ y\ {\isasymrhd}\ delta{\isacharunderscore}gset{\isacharunderscore}op\ x{\isachardoublequoteclose}\isanewline
\isadelimproof
\ \ %
\endisadelimproof
\isatagproof
\isacommand{apply}\isamarkupfalse%
\ {\isacharparenleft}auto\ simp\ add{\isacharcolon}\ kleisli{\isacharunderscore}def{\isacharparenright}\isanewline
\isacommand{done}\isamarkupfalse%
\endisatagproof
{\isafoldproof}%
\isadelimproof
\isanewline
\endisadelimproof
\isanewline
\isacommand{lemma}\isamarkupfalse%
\ {\isacharparenleft}\isakeyword{in}\ delta{\isacharunderscore}gset{\isacharparenright}\ concurrent{\isacharunderscore}operations{\isacharunderscore}commute{\isacharcolon}\isanewline
\ \ \isakeyword{assumes}\ {\isachardoublequoteopen}xs\ prefix\ of\ i{\isachardoublequoteclose}\isanewline
\ \ \isakeyword{shows}\ {\isachardoublequoteopen}hb{\isachardot}concurrent{\isacharunderscore}ops{\isacharunderscore}commute\ {\isacharparenleft}node{\isacharunderscore}deliver{\isacharunderscore}messages\ xs{\isacharparenright}{\isachardoublequoteclose}\isanewline
\isadelimproof
\ \ %
\endisadelimproof
\isatagproof
\isacommand{using}\isamarkupfalse%
\ assms\isanewline
\ \ \isacommand{apply}\isamarkupfalse%
{\isacharparenleft}clarsimp\ simp{\isacharcolon}\ hb{\isachardot}concurrent{\isacharunderscore}ops{\isacharunderscore}commute{\isacharunderscore}def{\isacharparenright}\isanewline
\ \ \isacommand{apply}\isamarkupfalse%
{\isacharparenleft}unfold\ interp{\isacharunderscore}msg{\isacharunderscore}def{\isacharcomma}\ simp{\isacharparenright}\isanewline
\isacommand{done}\isamarkupfalse%
\endisatagproof
{\isafoldproof}%
\isadelimproof
\isanewline
\endisadelimproof
\isanewline
\isacommand{corollary}\isamarkupfalse%
\ {\isacharparenleft}\isakeyword{in}\ delta{\isacharunderscore}gset{\isacharparenright}\ set{\isacharunderscore}convergence{\isacharcolon}\isanewline
\ \ \isakeyword{assumes}\ {\isachardoublequoteopen}set\ {\isacharparenleft}node{\isacharunderscore}deliver{\isacharunderscore}messages\ xs{\isacharparenright}\ {\isacharequal}\ set\ {\isacharparenleft}node{\isacharunderscore}deliver{\isacharunderscore}messages\ ys{\isacharparenright}{\isachardoublequoteclose}\isanewline
\ \ \ \ \ \ \isakeyword{and}\ {\isachardoublequoteopen}xs\ prefix\ of\ i{\isachardoublequoteclose}\isanewline
\ \ \ \ \ \ \isakeyword{and}\ {\isachardoublequoteopen}ys\ prefix\ of\ j{\isachardoublequoteclose}\isanewline
\ \ \ \ \isakeyword{shows}\ {\isachardoublequoteopen}apply{\isacharunderscore}operations\ xs\ {\isacharequal}\ apply{\isacharunderscore}operations\ ys{\isachardoublequoteclose}\isanewline
\isadelimproof
\endisadelimproof
\isatagproof
\isacommand{using}\isamarkupfalse%
\ assms\ \isacommand{by}\isamarkupfalse%
{\isacharparenleft}auto\ simp\ add{\isacharcolon}\ apply{\isacharunderscore}operations{\isacharunderscore}def\ intro{\isacharcolon}\ hb{\isachardot}convergence{\isacharunderscore}ext\ concurrent{\isacharunderscore}operations{\isacharunderscore}commute\isanewline
\ \ \ \ \ \ \ \ \ \ \ \ \ \ \ \ node{\isacharunderscore}deliver{\isacharunderscore}messages{\isacharunderscore}distinct\ hb{\isacharunderscore}consistent{\isacharunderscore}prefix{\isacharparenright}%
\endisatagproof
{\isafoldproof}%
\isadelimproof
\isanewline
\endisadelimproof
\isanewline
\isacommand{context}\isamarkupfalse%
\ delta{\isacharunderscore}gset\ \isakeyword{begin}\isanewline
\isanewline
\isacommand{sublocale}\isamarkupfalse%
\ sec{\isacharcolon}\ strong{\isacharunderscore}eventual{\isacharunderscore}consistency\ weak{\isacharunderscore}hb\ hb\ interp{\isacharunderscore}msg\isanewline
\ \ {\isachardoublequoteopen}{\isasymlambda}ops{\isachardot}\ {\isasymexists}xs\ i{\isachardot}\ xs\ prefix\ of\ i\ {\isasymand}\ node{\isacharunderscore}deliver{\isacharunderscore}messages\ xs\ {\isacharequal}\ ops{\isachardoublequoteclose}\ {\isachardoublequoteopen}{\isacharbraceleft}{\isacharbraceright}{\isachardoublequoteclose}\isanewline
\isadelimproof
\ \ %
\endisadelimproof
\isatagproof
\isacommand{apply}\isamarkupfalse%
{\isacharparenleft}standard{\isacharsemicolon}\ clarsimp{\isacharparenright}\isanewline
\ \ \ \ \ \ \isacommand{apply}\isamarkupfalse%
{\isacharparenleft}auto\ simp\ add{\isacharcolon}\ hb{\isacharunderscore}consistent{\isacharunderscore}prefix\ drop{\isacharunderscore}last{\isacharunderscore}message\ node{\isacharunderscore}deliver{\isacharunderscore}messages{\isacharunderscore}distinct\ concurrent{\isacharunderscore}operations{\isacharunderscore}commute{\isacharparenright}\isanewline
\ \ \ \isacommand{apply}\isamarkupfalse%
{\isacharparenleft}metis\ {\isacharparenleft}full{\isacharunderscore}types{\isacharparenright}\ interp{\isacharunderscore}msg{\isacharunderscore}def\ delta{\isacharunderscore}gset{\isacharunderscore}op{\isachardot}elims{\isacharparenright}\isanewline
\ \ \isacommand{using}\isamarkupfalse%
\ drop{\isacharunderscore}last{\isacharunderscore}message\ \isacommand{apply}\isamarkupfalse%
\ blast\isanewline
\ \ \isacommand{done}\isamarkupfalse%
\endisatagproof
{\isafoldproof}%
\isadelimproof
\isanewline
\endisadelimproof
\isacommand{end}\isamarkupfalse%
\isanewline
\isadelimtheory
\isanewline
\endisadelimtheory
\isatagtheory
\isacommand{end}\isamarkupfalse%
\endisatagtheory
{\isafoldtheory}%
\isadelimtheory
\endisadelimtheory
\end{isabellebody}%

\section{Restricted $\delta$-state G-Counter \CRDT}
\begin{isabellebody}
\isacommand{locale}\isamarkupfalse%
\ delta{\isacharunderscore}gcounter\ {\isacharequal}\ network{\isacharunderscore}with{\isacharunderscore}ops\ {\isacharunderscore}\ delta{\isacharunderscore}gcounter{\isacharunderscore}op\ {\isachardoublequoteopen}{\isasymlambda}\ x{\isachardot}\ None{\isachardoublequoteclose}\isanewline
\isanewline
\isacommand{lemma}\isamarkupfalse%
\ {\isacharparenleft}\isakeyword{in}\ delta{\isacharunderscore}gcounter{\isacharparenright}\ option{\isacharunderscore}max{\isacharunderscore}assoc{\isacharcolon}\isanewline
\ \ {\isachardoublequoteopen}option{\isacharunderscore}max\ a\ {\isacharparenleft}option{\isacharunderscore}max\ b\ c{\isacharparenright}\ {\isacharequal}\ option{\isacharunderscore}max\ {\isacharparenleft}option{\isacharunderscore}max\ a\ b{\isacharparenright}\ c{\isachardoublequoteclose}\isanewline
\isadelimproof
\ \ %
\endisadelimproof
\isatagproof
\isacommand{apply}\isamarkupfalse%
\ {\isacharparenleft}induction\ a{\isacharsemicolon}\ induction\ b{\isacharsemicolon}\ induction\ c{\isacharparenright}\isanewline
\ \ \isacommand{apply}\isamarkupfalse%
\ {\isacharparenleft}auto{\isacharparenright}\isanewline
\ \ \isacommand{done}\isamarkupfalse%
\endisatagproof
{\isafoldproof}%
\isadelimproof
\isanewline
\endisadelimproof
\isanewline
\isacommand{lemma}\isamarkupfalse%
\ {\isacharparenleft}\isakeyword{in}\ delta{\isacharunderscore}gcounter{\isacharparenright}\ option{\isacharunderscore}max{\isacharunderscore}commut{\isacharcolon}\ {\isachardoublequoteopen}option{\isacharunderscore}max\ a\ b\ {\isacharequal}\ option{\isacharunderscore}max\ b\ a{\isachardoublequoteclose}\isanewline
\isadelimproof
\ \ %
\endisadelimproof
\isatagproof
\isacommand{apply}\isamarkupfalse%
\ {\isacharparenleft}induction\ a{\isacharsemicolon}\ induction\ b{\isacharparenright}\isanewline
\ \ \isacommand{apply}\isamarkupfalse%
\ {\isacharparenleft}auto{\isacharparenright}\isanewline
\ \ \isacommand{done}\isamarkupfalse%
\endisatagproof
{\isafoldproof}%
\isadelimproof
\isanewline
\endisadelimproof
\isanewline
\isacommand{lemma}\isamarkupfalse%
\ {\isacharparenleft}\isakeyword{in}\ delta{\isacharunderscore}gcounter{\isacharparenright}\ {\isacharbrackleft}simp{\isacharbrackright}\ {\isacharcolon}\ {\isachardoublequoteopen}delta{\isacharunderscore}gcounter{\isacharunderscore}op\ x\ {\isasymrhd}\ delta{\isacharunderscore}gcounter{\isacharunderscore}op\ y\ {\isacharequal}\ delta{\isacharunderscore}gcounter{\isacharunderscore}op\ y\ {\isasymrhd}\ delta{\isacharunderscore}gcounter{\isacharunderscore}op\ x{\isachardoublequoteclose}\isanewline
\isadelimproof
\ \ %
\endisadelimproof
\isatagproof
\isacommand{apply}\isamarkupfalse%
\ {\isacharparenleft}auto\ simp\ add{\isacharcolon}\ kleisli{\isacharunderscore}def\ option{\isacharunderscore}max{\isacharunderscore}assoc{\isacharparenright}\isanewline
\ \ \isacommand{apply}\isamarkupfalse%
\ {\isacharparenleft}simp\ add{\isacharcolon}\ option{\isacharunderscore}max{\isacharunderscore}commut{\isacharparenright}\isanewline
\ \ \isacommand{done}\isamarkupfalse%
\endisatagproof
{\isafoldproof}%
\isadelimproof
\isanewline
\endisadelimproof
\isanewline
\isacommand{lemma}\isamarkupfalse%
\ {\isacharparenleft}\isakeyword{in}\ delta{\isacharunderscore}gcounter{\isacharparenright}\ concurrent{\isacharunderscore}operations{\isacharunderscore}commute{\isacharcolon}\isanewline
\ \ \isakeyword{assumes}\ {\isachardoublequoteopen}xs\ prefix\ of\ i{\isachardoublequoteclose}\isanewline
\ \ \isakeyword{shows}\ {\isachardoublequoteopen}hb{\isachardot}concurrent{\isacharunderscore}ops{\isacharunderscore}commute\ {\isacharparenleft}node{\isacharunderscore}deliver{\isacharunderscore}messages\ xs{\isacharparenright}{\isachardoublequoteclose}\isanewline
\isadelimproof
\ \ %
\endisadelimproof
\isatagproof
\isacommand{using}\isamarkupfalse%
\ assms\isanewline
\ \ \isacommand{apply}\isamarkupfalse%
{\isacharparenleft}clarsimp\ simp{\isacharcolon}\ hb{\isachardot}concurrent{\isacharunderscore}ops{\isacharunderscore}commute{\isacharunderscore}def{\isacharparenright}\isanewline
\ \ \isacommand{apply}\isamarkupfalse%
{\isacharparenleft}unfold\ interp{\isacharunderscore}msg{\isacharunderscore}def{\isacharcomma}\ simp{\isacharparenright}\isanewline
\ \ \isacommand{done}\isamarkupfalse%
\endisatagproof
{\isafoldproof}%
\isadelimproof
\isanewline
\endisadelimproof
\isanewline
\isacommand{corollary}\isamarkupfalse%
\ {\isacharparenleft}\isakeyword{in}\ delta{\isacharunderscore}gcounter{\isacharparenright}\ counter{\isacharunderscore}convergence{\isacharcolon}\isanewline
\ \ \isakeyword{assumes}\ {\isachardoublequoteopen}set\ {\isacharparenleft}node{\isacharunderscore}deliver{\isacharunderscore}messages\ xs{\isacharparenright}\ {\isacharequal}\ set\ {\isacharparenleft}node{\isacharunderscore}deliver{\isacharunderscore}messages\ ys{\isacharparenright}{\isachardoublequoteclose}\isanewline
\ \ \ \ \ \ \isakeyword{and}\ {\isachardoublequoteopen}xs\ prefix\ of\ i{\isachardoublequoteclose}\isanewline
\ \ \ \ \ \ \isakeyword{and}\ {\isachardoublequoteopen}ys\ prefix\ of\ j{\isachardoublequoteclose}\isanewline
\ \ \ \ \isakeyword{shows}\ {\isachardoublequoteopen}apply{\isacharunderscore}operations\ xs\ {\isacharequal}\ apply{\isacharunderscore}operations\ ys{\isachardoublequoteclose}\isanewline
\isadelimproof
\endisadelimproof
\isatagproof
\isacommand{using}\isamarkupfalse%
\ assms\ \isacommand{by}\isamarkupfalse%
{\isacharparenleft}auto\ simp\ add{\isacharcolon}\ apply{\isacharunderscore}operations{\isacharunderscore}def\ intro{\isacharcolon}\ hb{\isachardot}convergence{\isacharunderscore}ext\ concurrent{\isacharunderscore}operations{\isacharunderscore}commute\isanewline
\ \ \ \ \ \ \ \ \ \ \ \ \ \ \ \ node{\isacharunderscore}deliver{\isacharunderscore}messages{\isacharunderscore}distinct\ hb{\isacharunderscore}consistent{\isacharunderscore}prefix{\isacharparenright}%
\endisatagproof
{\isafoldproof}%
\isadelimproof
\isanewline
\endisadelimproof
\isanewline
\isacommand{context}\isamarkupfalse%
\ delta{\isacharunderscore}gcounter\ \isakeyword{begin}\isanewline
\isanewline
\isacommand{sublocale}\isamarkupfalse%
\ sec{\isacharcolon}\ strong{\isacharunderscore}eventual{\isacharunderscore}consistency\ weak{\isacharunderscore}hb\ hb\ interp{\isacharunderscore}msg\isanewline
\ \ {\isachardoublequoteopen}{\isasymlambda}ops{\isachardot}\ {\isasymexists}xs\ i{\isachardot}\ xs\ prefix\ of\ i\ {\isasymand}\ node{\isacharunderscore}deliver{\isacharunderscore}messages\ xs\ {\isacharequal}\ ops{\isachardoublequoteclose}\ {\isachardoublequoteopen}{\isasymlambda}\ x{\isachardot}\ None{\isachardoublequoteclose}\isanewline
\isadelimproof
\ \ %
\endisadelimproof
\isatagproof
\isacommand{apply}\isamarkupfalse%
{\isacharparenleft}standard{\isacharsemicolon}\ clarsimp{\isacharparenright}\isanewline
\ \ \ \ \ \ \isacommand{apply}\isamarkupfalse%
{\isacharparenleft}auto\ simp\ add{\isacharcolon}\ hb{\isacharunderscore}consistent{\isacharunderscore}prefix\ drop{\isacharunderscore}last{\isacharunderscore}message\ node{\isacharunderscore}deliver{\isacharunderscore}messages{\isacharunderscore}distinct\ concurrent{\isacharunderscore}operations{\isacharunderscore}commute{\isacharparenright}\isanewline
\ \ \ \isacommand{apply}\isamarkupfalse%
{\isacharparenleft}metis\ {\isacharparenleft}full{\isacharunderscore}types{\isacharparenright}\ interp{\isacharunderscore}msg{\isacharunderscore}def\ delta{\isacharunderscore}gcounter{\isacharunderscore}op{\isachardot}elims{\isacharparenright}\isanewline
\ \ \isacommand{using}\isamarkupfalse%
\ drop{\isacharunderscore}last{\isacharunderscore}message\ \isacommand{apply}\isamarkupfalse%
\ blast\isanewline
\ \ \isacommand{done}\isamarkupfalse%
\endisatagproof
{\isafoldproof}%
\isadelimproof
\isanewline
\endisadelimproof
\isacommand{end}\isamarkupfalse%
\isanewline
\isadelimtheory
\isanewline
\endisadelimtheory
\isatagtheory
\isacommand{end}\isamarkupfalse%
\endisatagtheory
{\isafoldtheory}%
\isadelimtheory
\endisadelimtheory
\end{isabellebody}%

\section{Restricted $\delta$-state G-Set \CRDT}
\begin{isabellebody}
\isacommand{locale}\isamarkupfalse%
\ delta{\isacharunderscore}gset\ {\isacharequal}\ network{\isacharunderscore}with{\isacharunderscore}ops\ {\isacharunderscore}\ delta{\isacharunderscore}gset{\isacharunderscore}op\ {\isachardoublequoteopen}{\isacharbraceleft}{\isacharbraceright}{\isachardoublequoteclose}\isanewline
\isanewline
\isacommand{lemma}\isamarkupfalse%
\ {\isacharparenleft}\isakeyword{in}\ delta{\isacharunderscore}gset{\isacharparenright}\ {\isacharbrackleft}simp{\isacharbrackright}\ {\isacharcolon}\ {\isachardoublequoteopen}delta{\isacharunderscore}gset{\isacharunderscore}op\ x\ {\isasymrhd}\ delta{\isacharunderscore}gset{\isacharunderscore}op\ y\ {\isacharequal}\ delta{\isacharunderscore}gset{\isacharunderscore}op\ y\ {\isasymrhd}\ delta{\isacharunderscore}gset{\isacharunderscore}op\ x{\isachardoublequoteclose}\isanewline
\isadelimproof
\ \ %
\endisadelimproof
\isatagproof
\isacommand{apply}\isamarkupfalse%
\ {\isacharparenleft}auto\ simp\ add{\isacharcolon}\ kleisli{\isacharunderscore}def{\isacharparenright}\isanewline
\isacommand{done}\isamarkupfalse%
\endisatagproof
{\isafoldproof}%
\isadelimproof
\isanewline
\endisadelimproof
\isanewline
\isacommand{lemma}\isamarkupfalse%
\ {\isacharparenleft}\isakeyword{in}\ delta{\isacharunderscore}gset{\isacharparenright}\ concurrent{\isacharunderscore}operations{\isacharunderscore}commute{\isacharcolon}\isanewline
\ \ \isakeyword{assumes}\ {\isachardoublequoteopen}xs\ prefix\ of\ i{\isachardoublequoteclose}\isanewline
\ \ \isakeyword{shows}\ {\isachardoublequoteopen}hb{\isachardot}concurrent{\isacharunderscore}ops{\isacharunderscore}commute\ {\isacharparenleft}node{\isacharunderscore}deliver{\isacharunderscore}messages\ xs{\isacharparenright}{\isachardoublequoteclose}\isanewline
\isadelimproof
\ \ %
\endisadelimproof
\isatagproof
\isacommand{using}\isamarkupfalse%
\ assms\isanewline
\ \ \isacommand{apply}\isamarkupfalse%
{\isacharparenleft}clarsimp\ simp{\isacharcolon}\ hb{\isachardot}concurrent{\isacharunderscore}ops{\isacharunderscore}commute{\isacharunderscore}def{\isacharparenright}\isanewline
\ \ \isacommand{apply}\isamarkupfalse%
{\isacharparenleft}unfold\ interp{\isacharunderscore}msg{\isacharunderscore}def{\isacharcomma}\ simp{\isacharparenright}\isanewline
\isacommand{done}\isamarkupfalse%
\endisatagproof
{\isafoldproof}%
\isadelimproof
\isanewline
\endisadelimproof
\isanewline
\isacommand{corollary}\isamarkupfalse%
\ {\isacharparenleft}\isakeyword{in}\ delta{\isacharunderscore}gset{\isacharparenright}\ set{\isacharunderscore}convergence{\isacharcolon}\isanewline
\ \ \isakeyword{assumes}\ {\isachardoublequoteopen}set\ {\isacharparenleft}node{\isacharunderscore}deliver{\isacharunderscore}messages\ xs{\isacharparenright}\ {\isacharequal}\ set\ {\isacharparenleft}node{\isacharunderscore}deliver{\isacharunderscore}messages\ ys{\isacharparenright}{\isachardoublequoteclose}\isanewline
\ \ \ \ \ \ \isakeyword{and}\ {\isachardoublequoteopen}xs\ prefix\ of\ i{\isachardoublequoteclose}\isanewline
\ \ \ \ \ \ \isakeyword{and}\ {\isachardoublequoteopen}ys\ prefix\ of\ j{\isachardoublequoteclose}\isanewline
\ \ \ \ \isakeyword{shows}\ {\isachardoublequoteopen}apply{\isacharunderscore}operations\ xs\ {\isacharequal}\ apply{\isacharunderscore}operations\ ys{\isachardoublequoteclose}\isanewline
\isadelimproof
\endisadelimproof
\isatagproof
\isacommand{using}\isamarkupfalse%
\ assms\ \isacommand{by}\isamarkupfalse%
{\isacharparenleft}auto\ simp\ add{\isacharcolon}\ apply{\isacharunderscore}operations{\isacharunderscore}def\ intro{\isacharcolon}\ hb{\isachardot}convergence{\isacharunderscore}ext\ concurrent{\isacharunderscore}operations{\isacharunderscore}commute\isanewline
\ \ \ \ \ \ \ \ \ \ \ \ \ \ \ \ node{\isacharunderscore}deliver{\isacharunderscore}messages{\isacharunderscore}distinct\ hb{\isacharunderscore}consistent{\isacharunderscore}prefix{\isacharparenright}%
\endisatagproof
{\isafoldproof}%
\isadelimproof
\isanewline
\endisadelimproof
\isanewline
\isacommand{context}\isamarkupfalse%
\ delta{\isacharunderscore}gset\ \isakeyword{begin}\isanewline
\isanewline
\isacommand{sublocale}\isamarkupfalse%
\ sec{\isacharcolon}\ strong{\isacharunderscore}eventual{\isacharunderscore}consistency\ weak{\isacharunderscore}hb\ hb\ interp{\isacharunderscore}msg\isanewline
\ \ {\isachardoublequoteopen}{\isasymlambda}ops{\isachardot}\ {\isasymexists}xs\ i{\isachardot}\ xs\ prefix\ of\ i\ {\isasymand}\ node{\isacharunderscore}deliver{\isacharunderscore}messages\ xs\ {\isacharequal}\ ops{\isachardoublequoteclose}\ {\isachardoublequoteopen}{\isacharbraceleft}{\isacharbraceright}{\isachardoublequoteclose}\isanewline
\isadelimproof
\ \ %
\endisadelimproof
\isatagproof
\isacommand{apply}\isamarkupfalse%
{\isacharparenleft}standard{\isacharsemicolon}\ clarsimp{\isacharparenright}\isanewline
\ \ \ \ \ \ \isacommand{apply}\isamarkupfalse%
{\isacharparenleft}auto\ simp\ add{\isacharcolon}\ hb{\isacharunderscore}consistent{\isacharunderscore}prefix\ drop{\isacharunderscore}last{\isacharunderscore}message\ node{\isacharunderscore}deliver{\isacharunderscore}messages{\isacharunderscore}distinct\ concurrent{\isacharunderscore}operations{\isacharunderscore}commute{\isacharparenright}\isanewline
\ \ \ \isacommand{apply}\isamarkupfalse%
{\isacharparenleft}metis\ {\isacharparenleft}full{\isacharunderscore}types{\isacharparenright}\ interp{\isacharunderscore}msg{\isacharunderscore}def\ delta{\isacharunderscore}gset{\isacharunderscore}op{\isachardot}elims{\isacharparenright}\isanewline
\ \ \isacommand{using}\isamarkupfalse%
\ drop{\isacharunderscore}last{\isacharunderscore}message\ \isacommand{apply}\isamarkupfalse%
\ blast\isanewline
\ \ \isacommand{done}\isamarkupfalse%
\endisatagproof
{\isafoldproof}%
\isadelimproof
\isanewline
\endisadelimproof
\isacommand{end}\isamarkupfalse%
\isanewline
\isadelimtheory
\isanewline
\endisadelimtheory
\isatagtheory
\isacommand{end}\isamarkupfalse%
\endisatagtheory
{\isafoldtheory}%
\isadelimtheory
\endisadelimtheory
\end{isabellebody}%

  \newpage
  \addcontentsline{toc}{chapter}{Bibliography}
  \bibliography{thesis.bib}

\begin{thebibliography}{18}
\providecommand{\natexlab}[1]{#1}
\providecommand{\url}[1]{\texttt{#1}}
\expandafter\ifx\csname urlstyle\endcsname\relax
  \providecommand{\doi}[1]{doi: #1}\else
  \providecommand{\doi}{doi: \begingroup \urlstyle{rm}\Url}\fi

\bibitem[Almeida et~al.(2018)Almeida, Shoker, and Baquero]{almedia18}
P.~S. Almeida, A.~Shoker, and C.~Baquero.
\newblock Delta state replicated data types.
\newblock \emph{Journal of Parallel and Distributed Computing}, 111:\penalty0
  162–173, Jan 2018.
\newblock ISSN 0743-7315.
\newblock \doi{10.1016/j.jpdc.2017.08.003}.
\newblock URL \url{http://dx.doi.org/10.1016/j.jpdc.2017.08.003}.

\bibitem[{Apple, Inc.}(2018)]{applenotes}
{Apple, Inc.}
\newblock ios runtime headers.
\newblock \url{https://github.com/nst/iOS-Runtime-Headers}, 2018.

\bibitem[Baquero et~al.(2014)Baquero, Almeida, and Shoker]{baquero14}
C.~Baquero, P.~S. Almeida, and A.~Shoker.
\newblock {Making Operation-Based CRDTs Operation-Based}.
\newblock In D.~Hutchison, T.~Kanade, B.~Steffen, D.~Terzopoulos, D.~Tygar,
  G.~Weikum, K.~Magoutis, P.~Pietzuch, J.~Kittler, J.~M. Kleinberg, A.~Kobsa,
  F.~Mattern, J.~C. Mitchell, M.~Naor, O.~Nierstrasz, and C.~P. Rangan,
  editors, \emph{{4th International Conference on Distributed Applications and
  Interoperable Systems (DAIS)}}, volume LNCS-8460 of \emph{Distributed
  Applications and Interoperable Systems}, pages 126--140, Berlin, Germany,
  June 2014. {Springer}.
\newblock \doi{10.1007/978-3-662-43352-2\_11}.
\newblock URL \url{https://hal.inria.fr/hal-01287738}.

\bibitem[Cabrita and Preguiça(2017)]{cabrita17}
G.~Cabrita and N.~Preguiça.
\newblock Non-uniform replication, 2017.

\bibitem[Cachin et~al.(2011)Cachin, Guerraoui, and Rodrigues]{cachin11}
C.~Cachin, R.~Guerraoui, and L.~Rodrigues.
\newblock \emph{Introduction to Reliable and Secure Distributed Programming}.
\newblock Springer Publishing Company, Incorporated, 2nd edition, 2011.
\newblock ISBN 3642152597.

\bibitem[Enes et~al.(2018)Enes, Almeida, Baquero, and Leitão]{enes18}
V.~Enes, P.~S. Almeida, C.~Baquero, and J.~Leitão.
\newblock Efficient synchronization of state-based crdts, 2018.

\bibitem[Gomes et~al.(2017)Gomes, Kleppmann, Mulligan, and Beresford]{gomes17}
V.~B.~F. Gomes, M.~Kleppmann, D.~P. Mulligan, and A.~R. Beresford.
\newblock Verifying strong eventual consistency in distributed systems.
\newblock \emph{CoRR}, abs/1707.01747, 2017.
\newblock URL \url{http://arxiv.org/abs/1707.01747}.

\bibitem[Howard(2019)]{howard19}
H.~Howard.
\newblock {Distributed consensus revised}.
\newblock Technical Report UCAM-CL-TR-935, University of Cambridge, Computer
  Laboratory, Apr. 2019.
\newblock URL \url{https://www.cl.cam.ac.uk/techreports/UCAM-CL-TR-935.pdf}.

\bibitem[Howard and Mortier(2020)]{howard20}
H.~Howard and R.~Mortier.
\newblock Paxos vs raft.
\newblock \emph{Proceedings of the 7thWorkshop on Principles and Practice of
  Consistency for Distributed Data}, Apr 2020.
\newblock \doi{10.1145/3380787.3393681}.
\newblock URL \url{http://dx.doi.org/10.1145/3380787.3393681}.

\bibitem[Lamport(1978)]{lamport78}
L.~Lamport.
\newblock Time, clocks, and the ordering of events in a distributed system.
\newblock \emph{Commun. ACM}, 21\penalty0 (7):\penalty0 558–565, July 1978.
\newblock ISSN 0001-0782.
\newblock \doi{10.1145/359545.359563}.
\newblock URL \url{https://doi.org/10.1145/359545.359563}.

\bibitem[Lamport(1998)]{lamport98}
L.~Lamport.
\newblock The part-time parliament.
\newblock \emph{ACM Trans. Comput. Syst.}, 16\penalty0 (2):\penalty0 133–169,
  May 1998.
\newblock ISSN 0734-2071.
\newblock \doi{10.1145/279227.279229}.
\newblock URL \url{https://doi.org/10.1145/279227.279229}.

\bibitem[Nipkow et~al.(2002)Nipkow, Paulson, and Wenzel]{wenzel02}
T.~Nipkow, L.~C. Paulson, and M.~Wenzel.
\newblock \emph{Isabelle/HOL --- A Proof Assistant for Higher-Order Logic},
  volume 2283 of \emph{LNCS}.
\newblock Springer, 2002.

\bibitem[Ongaro and Ousterhout(2014)]{ongaro14}
D.~Ongaro and J.~Ousterhout.
\newblock In search of an understandable consensus algorithm.
\newblock In \emph{2014 {USENIX} Annual Technical Conference ({USENIX} {ATC}
  14)}, pages 305--319, Philadelphia, PA, June 2014. {USENIX} Association.
\newblock ISBN 978-1-931971-10-2.
\newblock URL
  \url{https://www.usenix.org/conference/atc14/technical-sessions/presentation/ongaro}.

\bibitem[{Redis, Inc.}(2020)]{redis}
{Redis, Inc.}
\newblock Under the hood: Redis crdts.
\newblock \url{https://redislabs.com/docs/under-the-hood/}, 2020.

\bibitem[Shapiro et~al.(2011)Shapiro, Pregui{\c c}a, Baquero, and
  Zawirski]{shapiro11}
M.~Shapiro, N.~Pregui{\c c}a, C.~Baquero, and M.~Zawirski.
\newblock {Conflict-free Replicated Data Types}.
\newblock Research Report RR-7687, July 2011.
\newblock URL \url{https://hal.inria.fr/inria-00609399}.

\bibitem[van~der Linde et~al.(2016)van~der Linde, Leit\~{a}o, and
  Pregui\c{c}a]{vanDerLinde16}
A.~van~der Linde, J.~a. Leit\~{a}o, and N.~Pregui\c{c}a.
\newblock $\delta$-crdts: Making $\delta$-crdts delta-based.
\newblock In \emph{Proceedings of the 2nd Workshop on the Principles and
  Practice of Consistency for Distributed Data}, PaPoC ’16, New York, NY,
  USA, 2016. Association for Computing Machinery.
\newblock ISBN 9781450342964.
\newblock \doi{10.1145/2911151.2911163}.
\newblock URL \url{https://doi.org/10.1145/2911151.2911163}.

\bibitem[Wilcox et~al.(2015)Wilcox, Woos, Panchekha, Tatlock, Wang, Ernst, and
  Anderson]{wilcox15}
J.~R. Wilcox, D.~Woos, P.~Panchekha, Z.~Tatlock, X.~Wang, M.~D. Ernst, and
  T.~Anderson.
\newblock Verdi: A framework for implementing and formally verifying
  distributed systems.
\newblock In \emph{PLDI 2015: Proceedings of the {ACM} {SIGPLAN} 2015
  Conference on Programming Language Design and Implementation}, pages
  357--368, Portland, OR, USA, June 2015.

\bibitem[Woos et~al.(2016)Woos, Wilcox, Anton, Tatlock, Ernst, and
  Anderson]{woos16}
D.~Woos, J.~R. Wilcox, S.~Anton, Z.~Tatlock, M.~D. Ernst, and T.~Anderson.
\newblock Planning for change in a formal verification of the raft consensus
  protocol.
\newblock In \emph{Proceedings of the 5th ACM SIGPLAN Conference on Certified
  Programs and Proofs}, CPP 2016, page 154–165, New York, NY, USA, 2016.
  Association for Computing Machinery.
\newblock ISBN 9781450341271.
\newblock \doi{10.1145/2854065.2854081}.
\newblock URL \url{https://doi.org/10.1145/2854065.2854081}.

\end{thebibliography}
\end{document}